\documentclass[10pt, letterpaper]{IEEEtran}

\usepackage{amsmath,amssymb,amsthm}
\usepackage{subfigure}
\usepackage{tikz}
\usepackage[margin=1.4cm]{geometry}
\usepackage{algorithm}
\usetikzlibrary{trees}
\usepackage{epsfig}
\usepackage[noend]{algorithmic}
\usepackage{algorithmic}

\usepackage{url}

\newtheorem{theorem}{Theorem}

\newtheorem{proposition}{Proposition}
\newtheorem{definition}{Definition}
\newtheorem{lemma}{Lemma}

\newtheorem{remark}{Remark}

\begin{document}
\title{\huge Dynamic Monopolies in Colored Tori}
\author{\IEEEauthorblockN{Sara Brunetti\IEEEauthorrefmark{1}, Elena Lodi\IEEEauthorrefmark{1}, Walter Quattrociocchi\IEEEauthorrefmark{1}}\bigskip

%\IEEEauthorblockA{\IEEEauthorrefmark{1}
 % University of Ottawa, Canada,\\
  %{\tt \small \{casteig,flocchin\}@site.uottawa.ca}}\\
\IEEEauthorblockA{\IEEEauthorrefmark{1}
  University of Siena, Italy,\\
  {\tt \small [lodi, sara.brunetti,walter.quattrociocchi]@unisi.it}}\\
%\IEEEauthorblockA{\IEEEauthorrefmark{3}
  %Carleton University, Ottawa, Canada,\\
  %{\tt \small santoro@scs.carleton.ca}}
}
\date{}
\maketitle
\begin{abstract}

The {\em information diffusion} has been modeled as the spread of an information within a group through a process of social influence, where the diffusion is driven by the so called {\em influential network}. Such a process, which has been intensively studied under the name of {\em viral marketing}, has the goal to select an initial good set of individuals that will promote a new idea (or message) by spreading the "rumor" within the entire social network through the word-of-mouth. 
Several studies used the {\em linear threshold model} where the group is represented by a graph, nodes have two possible states (active, non-active), and the threshold triggering the adoption (activation) of a new idea to a node is given by the number of the active neighbors. 

The problem of detecting in a graph the presence of the minimal number of nodes that will be able to activate the entire network is called {\em target set selection} (TSS). In this paper we extend TSS by allowing nodes to have more than two colors. 
The multicolored version of the TSS can be described as follows: let $G$ be a torus where every node is assigned a color from a finite set of colors. At each local time step, each node can recolor itself, depending on the local configurations, with the color held by the majority of its neighbors. 

We study the initial distributions of colors leading the system to a monochromatic configuration of color $k$, focusing on the minimum number of initial $k$-colored nodes. We conclude the paper by providing the time complexity to achieve the monochromatic configuration.

\end{abstract}

\begin{keywords}
Dynamic Monopolies, Distributed Algorithms
\end{keywords}

\section{Introduction}
The emerging global effects of protocols based on local interactions are studied in the distributed algorithms field. For instance, the interplay between the election/consensus problems and the protocols based on majority rules has been extensively studied in several works such as \cite{Kossmann} \cite{Peleg96b}, \cite{Thomas79}, \cite{Burman09}, \cite{Shao09} and \cite{Garcia85}. In fact, social behaviors often provide useful insights in determining the way problems are solved in computer science. For instance, matters of security in decision making or in collaborative content filtering/production are enforced by implementing reputation and trust strategies \cite{JosangQ09,schillo99,Sabater2007,Konig2009,quattrociocchi2008b,QuattrociocchiPC09}.
The {\em information diffusion} has been modeled as the spread of an information within a group through a process of social influence, where the diffusion is driven by the so called {\em influential network} \cite{Kats2005}. Such a process, which has been intensively studied under the name of {\em viral marketing} (see for instance \cite{Domingos2001}), has the goal to select an initial good set of individuals that will promote a new idea (or message) by spreading the "rumor" within the entire social network through the word-of-mouth. Such a problem has been extended also to opinion dynamics in social network \cite{amblard01,quattrociocchi2010d}. The first computational study about information diffusion \cite{Granovetter85} used the {\em linear threshold model} where the group is represented by a graph and the threshold triggering the adoption (activation) of a new idea to a node is given by the number of the active neighbors. 
The impossibility of a node to return (or not) in its initial state determines the monotone (or non-monotone) behavior of the activation process. In a graph detecting the presence of the minimal number of nodes that will activate the whole network, namely the target set selection process (TSS), has been proved to be NP-hard through a reduction to the vertex cover \cite{Kempe03}.
In \cite{Chang09a} the maximum size of a {\em minimum perfect target set} under simple and strong majority thresholds - e.g. $\lceil d(v)/2 \rceil$ and $\lceil d(v)+1/2 \rceil$ respectively, with $d(v)$ standing for the degree of the node $v$ - has been studied. 
Other works carried out the behaviors of majority based systems with respect to the fault tolerance in different networks topologies \cite{Lodi98}, \cite{Santoro03} \cite{Carvaja07}, \cite{Ching09}, \cite{ChoudharyR09}, \cite{Mustafa04} and \cite{MustafaP01}.
The major effort was in determining the distribution of initial faults leading the entire system to a faulty behavior.  Such a pattern, called dynamic monopoly or shortly {\em dynamo} was introduced by Peleg \cite{Peleg96b}, and intensively studied as far as concern bounds of the size of monopolies, time to converge into a fixed point, topologies of the systems (see \cite{BermondBPP96}, \cite{BermondBPP03}, \cite{Bermond98} and \cite{NayakPS92}).
In these works, the propagation of a faulty behavior starts by a well placed set of faulty elements and can be described as a vertex-coloring game on graphs where vertices are colored black (faulty) or white (non-faulty). At each round nodes change their color according to the colors of their neighbors.

In this paper we extend the problem proposed in \cite{Lodi98} by introducing additional elements to the original problem statement. Here, the set of colors is not limited to white or black,
but vertices of the torus can assume a color from a finite set. This general framework finds application in all those contexts where an entity can assume a certain number of different states. 
The rule defined in our model can be stated as follows: a node recolors itself by directly assuming the color of the adjacent vertices either if two neighbors have the same color and the remaining ones have different colors in between or all the neighbors have the same color. Note that our protocol implements an extension of the reverse simple majority rule as defined in \cite{Lodi98}, but in this work we resolve the ambiguity of the rule without stating any preference to any color.
More precisely, in \cite{Lodi98} if in the neighborhood of a node $v$ there are two black and and two white nodes, $v$ recolors black, whereas in our case the node does not change color.
These options aimed at giving a priority to a specific color or to the current color of the vertex in case of ambiguity have been already defined in \cite{Peleg96b} and are known as Prefer-Black (PB) and Prefer-Current (PC), respectively.
Therefore our rule restricted to two colors does not reduce to the rule in \cite{Lodi98}. As a consequence, our results are not trivially obtained by those for two colors. %since our problem cannot be reduced to the problem \cite{Lodi98}.

We are interested in studying the initial distribution of colors able to produce a monochromatic configuration in a finite time.  After having shortly introduced the notation and the main definitions, 
we present the protocol and the results. Tights lower and upper bounds on the size of dynamos for toroidal mesh, torus cordalis and torus serpentinus are provided.
For each topology a dynamo of the minimum size is given and we prove the time complexity in terms of number of rounds needed to reach a monochromatic configuration.
If the set of colors is ordered (finite set of integers) a different rule can be defined where a node recoloring itself increases its color by one. Studies about the effects of this rule have been addressed in \cite{brunetti2010a,brunetti2010b}.

\section{Basic Definitions}

\subsection{The Interaction Topologies}
Let us consider any $m\times n$ mesh $T=(V,E)$.

%The entities, which are the vertices of a torus, interact by exchanging and processing information with the neighboring vertices.

\begin{definition} A toroidal  mesh of $m \times n$ vertices is a mesh where each vertex $v_{i,j}$ (
$0\leq i \leq m-1$ and $0\leq j \leq n-1$) is connected to the
four vertices $v_{(i-1)\mod m,j}$ , $v_{(i+1)\mod m,j}$ ,
$v_{i,(j-1)\mod n}$ and $v_{i,(j+1)\mod n}$.
\end{definition}

A \textbf{torus cordalis} is similar to a toroidal mesh except
that the last vertex $v_{{i},{n-1}}$ of each row is connected to
the first vertex $v_{(i+1)\mod m,0}$ of row $i+1$.

A \textbf{torus serpetinus} is similar to a torus cordalis except
that
 the last vertex
$v_{{m-1},{j}}$ of each column $j$ is connected to the first
vertex $v_{0,(j-1)\mod n}$ of column $j-1$.

\subsection{The Protocol}
Let $\mathcal{C}=\{1,\ldots, k\}$ be a finite set of colors. A
coloring of a torus $T$ is a function $r:\; V\rightarrow \mathcal{C}$.
We will refer to a \textbf{bi-colored torus} when $|C| = 2$ and to a \textbf{multi-colored torus} when $|C|>2$.
Let $N(x)$ denote the neighborhood of any vertex $x$ in $V$; we
have that $|N(x)|=4$. Given a coloring $r$ of $V$, we can define
the following ``simple majority with persuadable entities'' protocol (\textbf{SMP-Protocol}).

\begin{center}
\begin{algorithm}[h!]
  \caption{SMP-Protocol}
\begin{algorithmic}[h!]
\FORALL { $x$ $\in$ $T$}
\STATE let $a,b,c,d$ $\in$ $N(x)$
\IF {$(r(a)=r(b))$ $\wedge$ $(r(c)\neq r(d))$ $\vee$ $(r(a)=r(b)=r(c)=r(d))$}
        \STATE  $r(x) \gets r(a)$
%\ELSE
%        \STATE $r(x)= r(x)$
\ENDIF
\ENDFOR
\end{algorithmic}
\end{algorithm}
\end{center}

Furthermore we denote the subset of $T$ of all $k$-colored
vertices by $S^k$, and the set of its $k$-colored vertices by
$V^k$ ($k\in \mathcal{C}$).

The recoloring process represents the dynamics of the system.
Depending on the initial coloring of $T$, we get different
dynamics. Among the possible initial configurations (i.e.
assignments of colors) we are interested in those leading to a
monochromatic coloring, so called dynamos.

We generalize the concept of dynamo as follows:

\begin{definition}
Given an initial coloring of $T$, let $S^k$ be a subset of $T$ where all vertices have the same color $k \in \mathcal{C}$;  $S^k$ is a \textbf{dynamo} if a $k$-monochromatic
configuration is reached in a finite number of steps
under the \textbf{SMP-Protocol}.
\end{definition}

We point out that this definition of dynamo is related to the coloring of $T$, i.e. depends both on the initial distribution of the $k$-colored vertices and on the initial assignment of other colors.

\begin{definition}
A dynamo $S^k$ is  \textbf{monotone} if the set of $k$-colored
vertices at time $t$ is a subset of the one at time $t+1$.
\end{definition}
Our focus is in monotone dynamos of minimum size in tori.
Finally we need to introduce the following definitions.
\begin{definition}
A $\textbf{k-block}$ $B^k$ is a connected subset of $T$ made up of
 vertices of the same color $k$ each of which has at least
 \bf{two}
 neighbors in $B^k$.
\end{definition}

Note that vertices in $B^k$ will never change their color.
Moreover, since different tori define different neighborhoods of
the border nodes, we get different configurations for a block in
the three kind of tori. For instance a single column of
$k$-colored vertices is a $k$-block in a toroidal mesh and in a
torus cordalis but not in a torus serpentinus, whereas two
consecutive columns of $k$-colored vertices constitute a $k$-block
in all the tori. A single row of $k$-colored vertices is a
$k$-block in a toroidal mesh but not in a torus cordalis and in a
torus serpentinus, whereas two consecutive rows of $k$-colored
vertices constitute a $k$-block in all the tori. Let
$|\mathcal{C}|>2$, we get:

\begin{definition}
A $\textbf{non-k-block}$ $NB^k$ is a connected subset of $T$ made
up of vertices of colors in $\mathcal{C}\setminus \{k\}$ each of
which has at least \bf{three} neighbors in $NB^k$.
\end{definition}

This definition implies that every vertex in $NB^k$ has at most
one $k$-colored neighbor, namely,
 vertices in $NB^k$ will never recolor by $k$ color. For
 example, two consecutive rows or columns of vertices not colored by $k$ constitute a non-$k$-block in all the tori.

\subsection{The Problem}
In this paper we consider multi-colored tori, and we study the
dynamics of the system under the \textbf{SMP-Protocol}. In
particular, we are interested in determining the minimum size
dynamo for a multi-colored torus (toroidal mesh, torus cordalis
and torus serpentinus). This is obtained by first computing a
lower bound to the size and then an upper bound close to the lower
bound.

Consider the problem of determining a lower bound to the size of a
dynamo for a multi-colored torus. We can define a polynomial time
transformation $\phi: \mathcal{C}\rightarrow \mathcal{C}$ such
that $\phi(i)=1$, for $i=1,\ldots,k-1$, and $\phi(k)=2$. This
transformation allows us to map a multi-colored torus into a
bi-colored torus (where $1$ and $2$ correspond to colors white and
black, respectively). Moreover under transformation $\phi$, a
$non$-$k$-block corresponds to a {\it simple white block} of
\cite{Lodi98}.
\begin{proposition}
A lower bound to the size of a dynamo in a bi-colored torus under
the reverse simple majority rule is a lower bound to the size of a
dynamo in a multi-colored torus under the \textbf{SMP-Protocol}.
\label{prop1}
\end{proposition}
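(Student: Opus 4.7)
The plan is to show that the transformation $\phi$ sends any dynamo of the multi-colored torus to a dynamo of the associated bi-colored torus of the same cardinality, so that a lower bound in the latter setting automatically transfers as a lower bound for the former. Concretely, starting from an initial coloring $r_0$ whose $k$-colored set is $S^k$, I would couple the two evolutions by running the \textbf{SMP-Protocol} on $r_0$ and the reverse simple majority rule on $\phi\circ r_0$ in parallel, and compare them round by round.

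The core of the proof is the following simulation invariant, established by induction on the round $t$: every vertex that is $k$-colored in the multi-colored torus is black in the bi-colored torus. The base case is immediate from the definition of $\phi$. For the inductive step, suppose that at round $t+1$ the \textbf{SMP-Protocol} recolors a vertex $x$ to color $k$; by the protocol this happens only when either all four neighbors of $x$ are $k$-colored, or two of them are $k$-colored while the other two carry two distinct colors. In both cases the inductive hypothesis places at least two black neighbors around $x$ in the bi-colored torus, and the Prefer-Black tie-break of the reverse simple majority rule used in \cite{Lodi98} forces $x$ to black. Once the invariant holds at every round, the hypothesis that $S^k$ is a dynamo forces the bi-colored evolution to reach the all-black configuration as well, and $S^k$ viewed as the set of initially black vertices in $\phi\circ r_0$ is a bi-colored dynamo of the same size. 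The proposition then follows by contrapositive.

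The delicate point I expect to require the most care is the subcase in which $x$ is already $k$-colored but the \textbf{SMP-Protocol} does not trigger at $x$, typically when the four neighbors of $x$ split into two pairs of distinct colors both different from $k$: if the bi-colored rule were to flip such an $x$ back to white, the invariant would break. The paper's focus on \emph{monotone} dynamos resolves this, since reading both protocols in their monotone variants (once recolored to $k$ or to black, a vertex retains that color) keeps the invariant trivially preserved. An alternative and perhaps cleaner route uses the block correspondence already recorded in the paper, namely that any non-$k$-block of $V\setminus S^k$ maps under $\phi$ exactly to a simple white block of \cite{Lodi98}; thus any obstruction to reaching the $k$-monochromatic configuration on the multi-colored side produces an obstruction of the same type and size on the bi-colored side, from which the lower bound transfer follows immediately.
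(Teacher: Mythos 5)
Your main simulation argument has a real gap, and you have in fact put your finger on it yourself: the invariant ``$k$-colored $\Rightarrow$ black'' can fail when a vertex $x$ keeps its color $k$ under the \textbf{SMP-Protocol} (for instance when its four neighbors carry colors $i,i,j,j$ with $i\neq j$ both different from $k$, so the protocol does not trigger) while its image, a black vertex with four white neighbors, is flipped back to white by the reversible simple majority rule. The patch you propose --- ``reading both protocols in their monotone variants'' --- is not legitimate as stated: the lower bounds you want to import from \cite{Lodi98} are proved for the reverse simple majority rule as it is, and silently replacing that rule by its irreversible version means you are transferring a bound established for a different dynamics. The correct repair, if you want to keep the simulation, is structural rather than a change of rule: for a \emph{monotone} dynamo, Lemma \ref{l2} forces $S^k$ to be a union of $k$-blocks, so every initially $k$-colored vertex has at least two $k$-colored neighbors, and every vertex that later recolors to $k$ does so because it has at least two $k$-colored neighbors which, by monotonicity, remain $k$-colored forever. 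Hence every $k$-colored vertex has, at every relevant time, at least two neighbors that are black by the inductive hypothesis, and the Prefer-Black tie-break then keeps it black; the bad configuration $i,i,j,j$ around a $k$-colored vertex simply never occurs along the evolution of a monotone dynamo. Without this (or some substitute) the induction does not close, and for non-monotone dynamos the simulation is not justified at all.

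Your ``alternative route'' at the end is essentially the paper's own proof, and it is the cleaner one: the lower-bound arguments of \cite{Lodi98} use only the necessary condition that no simple white block exists or can arise in the complement of $S^2$, and since a simple white block is defined by the same purely positional requirement (each vertex has at least three neighbors inside the set, all not colored $k$) as a non-$k$-block, the two necessary conditions correspond exactly under $\phi$ and the numerical bounds transfer verbatim, with no need to couple the two dynamics round by round. I would promote that paragraph to the main argument, and keep the simulation --- properly repaired via Lemma \ref{l2} --- only if you want the stronger statement that $\phi$ maps monotone multi-colored dynamos to bi-colored dynamos of the same size.
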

%\begin{proof}
Indeed a lower bound consists in the smallest size of $S^2$
(initial set of black vertices) such that no simple white blocks
can arise in the first problem, and in the smallest size of $S^k$
such that no $non$-$k$-blocks can arise in the second problem.
Because of the correspondence between a $non$-$k$-block and a
simple white block the claim follows. Hence lower bounds for the
multi-colored tori under the \textbf{SMP-Protocol} can be easily
derived by the ones for the bi-colored tori under the reverse
simple majority rule.
%\end{proof}
\begin{proposition}
An upper bound to the size of a dynamo in a bi-colored torus under
the reverse strong majority rule is an upper bound to the size of
a dynamo in a multi-colored torus under the \textbf{SMP-Protocol}.
\label{prop2}
\end{proposition}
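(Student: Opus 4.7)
The plan is to take a minimum-size dynamo under the reverse strong majority rule in a bi-colored torus and lift it to the multi-colored torus by using only two colors out of $\mathcal{C}$. Given a dynamo $S$ of size $U$ in the bi-colored torus, I would define the initial coloring $r_{0}$ of the multi-colored torus by $r_{0}(v)=k$ for $v\in S$ and $r_{0}(v)=1$ (a single fixed color in $\mathcal{C}\setminus\{k\}$) for every other vertex. The associated $S^{k}$ then has size exactly $U$, and the goal reduces to showing that the \textbf{SMP-Protocol} starting from $r_{0}$ reaches the $k$-monochromatic coloring in finite time.

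The crucial step is a local equivalence: whenever only the two colors $1$ and $k$ appear in the neighborhood of a vertex $x$, the premise of the \textbf{SMP-Protocol} is satisfied exactly when at least three of the four neighbors share a common color, in which case $x$ adopts that color. I would verify this by inspecting the five neighborhood types on $\{1,k\}$, namely $(1,1,1,1)$, $(1,1,1,k)$, $(1,1,k,k)$, $(1,k,k,k)$, $(k,k,k,k)$: the conjunction $(r(a)=r(b))\wedge(r(c)\neq r(d))$ is realizable, for some assignment of the labels $a,b,c,d$ to the four neighbors, precisely in the $3$-$1$ and $1$-$3$ cases; the disjunct $r(a)=r(b)=r(c)=r(d)$ handles the $4$-$0$ and $0$-$4$ cases; and no valid assignment exists in the balanced $2$-$2$ case. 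This is exactly the reverse strong majority rule restricted to the palette $\{1,k\}$.

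Since the \textbf{SMP-Protocol} can only assign to a vertex a color already appearing in its neighborhood, the palette $\{1,k\}$ is preserved throughout the evolution. Consequently the trajectory of $r_{0}$ under the \textbf{SMP-Protocol} coincides, round by round, with the trajectory of $(S,V\setminus S)$ in the bi-colored torus under the reverse strong majority rule. Because $S$ is a dynamo there, this common trajectory reaches the all-black (respectively all-$k$) configuration in finite time, so $S^{k}$ is a dynamo of size $U$ under the \textbf{SMP-Protocol}. Applied to a bi-colored dynamo of minimum size, this construction yields an upper bound on the minimum multi-colored dynamo size that matches the bi-colored one, which is exactly the claim of the proposition.

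The step that will require care is the local equivalence: the pseudocode treats $a,b,c,d$ as an unordered assignment of labels to the four neighbors, and one must check that the existential interpretation really does realize the conjunction precisely in the $3$-$1$ cases and not spuriously in the $2$-$2$ case. Once this checklist is settled, the transfer is a direct simulation argument and introduces no further subtleties from the multi-colored setting, so I do not expect any obstacle beyond this bookkeeping.
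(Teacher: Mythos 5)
Your proof is correct, but it takes a genuinely different route from the paper's. The paper argues by a structural correspondence between upper-bound \emph{proofs}: it matches the ``no strong white block arises'' condition of \cite{Lodi98} with the ``no $i$-block arises'' condition here (item a), and then invokes a rule-domination claim (item b) that the reverse strong majority rule is more restrictive than the \textbf{SMP-Protocol}, so whatever the former recolors the latter recolors as well. You instead give a direct simulation: embed the bi-colored dynamo into the multi-colored torus using only the palette $\{1,k\}$, verify by the case analysis on neighborhood types that on a two-color palette the existential reading of the guard $(r(a)=r(b))\wedge(r(c)\neq r(d))\,\vee\,(r(a)=r(b)=r(c)=r(d))$ fires exactly on the $3$--$1$ and $4$--$0$ splits and never on the $2$--$2$ split, and observe that the \textbf{SMP-Protocol} never introduces a color absent from a neighborhood, so the two-color palette is invariant and the trajectories coincide round by round. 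Your case check is right, and the argument is more self-contained: it avoids the domination reasoning, which is delicate for non-monotone (reversible) dynamics, and it establishes the sharper fact that the \textbf{SMP-Protocol} restricted to two colors \emph{is} the reversible strong majority rule, which is consistent with (and cleaner than) the paper's item b. The one hypothesis you are silently using is that the ``reverse strong majority rule'' of \cite{Lodi98} is the symmetric, reversible rule in which a vertex adopts any color held by at least three of its neighbors; this is the reading the paper itself adopts, but if that rule were instead irreversible (only non-$k$ vertices may recolor to $k$), the round-by-round coincidence would fail because under the \textbf{SMP-Protocol} a $k$-colored vertex with three $1$-colored neighbors recolors to $1$. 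A one-line remark fixing that reading would make your argument airtight.
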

Indeed in order to establish an upper bound to the size of $S^2$,
no {\it strong white blocks} of \cite{Lodi98} have to arise and
successive derived black sets of vertices have to contain the set
$V$ of all the vertices  at the end of the process, in the first
problem. Similarly, to obtain an upper bound to the size of $S^k$,
no $i$-blocks have to arise and successive derived $k$-colored
sets of vertices have to contain the set $V$ of all the vertices
at the end of the process, in the second problem. We have that: a)
strong white blocks correspond to $i$-blocks; b) reverse strong
majority rule is more restrictive than \textbf{SMP-Protocol}:
indeed, under reverse strong majority rule, a vertex recolors
itself if there are three vertices in its neighborhood having the
same color, whereas under the \textbf{SMP-Protocol} two neighbors
with the same color are requested (and the others with different
colors). Because of a) and b) the claim follows. Note that item b)
implies that the upper bound so determined is stronger than
sufficient. This reflects in a bound on the size of $S^k$ far from
the lower bound previously obtained. Therefore the valid upper
bounds close to the lower bounds should be obtained in a different
way and not trivially deduced as a consequences of Proposition
\ref{prop2}.

%\section{Stating the Difference between Bicolored and Multicolored Dynamos}
%
%Since this work is a prosecution of the \cite{lodi98}, in this section we describe the main differences and analogies between the problems studied.
%On the one hand, in \cite{lodi98}, the dynamos was studied on top of bicolored tori in order to study the diffusion of faulty entities. On the other hand, it is also interesting to study how does such a diffusion patter may affect a multi-colored torus by exploring not only the propagation of fault entities but also to address the problems of consensus and election, which is the target of this paper.
%
%Let $S_{1}$ and $S_{k}$ be respectively an initial configuration in a bicolored and a multicolored set of possible entities states. At each entity of $S_{1}$ is assigned one of the color in the set $C_{1} = {0,1}$ and in $S_{k}$ with the set $C_{1} = {0...k}$.
%
%The initial configurations leading the torus to a monochromatic fixed point are defined dynamic monopolies (dynamo).
%
%
%\begin{theorem}
%A dynamo in $S_{1}$ is a Dynamo in $S_{k}$
%\end{theorem}
%
%\begin{proof}
% translating a dynamo in 0,1 in Sk where all 0 in s1 will become 0..k-1 ......
%\end{proof}
%
%
%
%\begin{theorem}
%A dynamo in $S_{k}$ is not a dynamo in $S_{1}$
%\end{theorem}
%
%\begin{proof}
% translating a dynamo in 0,1 in Sk where all 0 in s1 will become 0..k-1 ......
%\end{proof}
%
%
%The definition of the previous theorems allows now the possibility to define a correlation between the two different class of problems.
%
%\begin{equation}
%S_{1} \\subseteq S_{K}
%\end{equation}
%
%

\section{Dynamos within Persuadable Entities}
We are interested in necessary and sufficient conditions for an
initial set of vertices of the same color to be a minimum size
dynamos. The kind of meshes that we take into consideration are:
the toroidal mesh, the torus cordalis and the torus serpentinus.
As shown above lower bounds for the multi-colored tori under the
\textbf{SMP-Protocol} can be easily derived by the ones for the
bi-colored tori under the reverse simple majority rule. Therefore
we list here the results (Lemmas 1-3, Theorems 1,11,13 of
\cite{Lodi98}) that we need for our further considerations.
%
%As we are going to show,  some results from \cite{Lodi98} hold in
%our context, and some others can be modified in order to work for
%more colors and our protocol. For sake of clarity and
%completeness, we include all the proofs even if derivable from the
%ones in [\cite{Lodi98}] by using the concept of non-$k$-block
%introduced in the previous section.

Let $F\subseteq T$, $R_F$ be the smallest rectangle containing $F$
and $m_F\times n_F$ the size of $R_F$.
Now, the set of vertices derivable from $F$ are the recolored vertices obtained (within a
finite number of steps) by applying the \textbf{SMP-Protocol} to the vertices in $F$.

First of all, we start by looking for the necessary conditions.
%Lemma 1 of \cite{Lodi98} holds also under the constraint induced
%by our protocol.

\begin{lemma}
Let $S^k$ be a $k$-colored subset of any torus with $m_{S^k}<m-1$
and/or $n_{S^k}<n-1$. Then any $k$-colored set $D^k$ derivable
from $S^k$ satisfies $m_{D^k}\leq m_{S^k}$ and/or $n_{D^k}\leq
n_{S^k}$.
\end{lemma}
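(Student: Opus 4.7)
The plan is to prove the contrapositive-flavored statement: if $m_{S^k}<m-1$, then the row-span of $k$-colored vertices cannot grow under the \textbf{SMP-Protocol}, i.e., $m_{D^k}\leq m_{S^k}$; symmetrically for columns when $n_{S^k}<n-1$. Combined, these two one-sided statements give the ``and/or'' conclusion. I would proceed by induction on the number of rounds of the protocol.

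Suppose $m_{S^k}<m-1$ and let $I=[i_0,i_0+m_{S^k}-1]\pmod m$ be the interval of rows spanned by $S^k$. Since $|I|=m_{S^k}<m-1$, the complement of $I$ contains at least two consecutive rows (modulo $m$). The induction hypothesis $H(t)$ is: at time $t$, every $k$-colored vertex lies in some row of $I$. The base case $H(0)$ is immediate from the definition of $I$. For the inductive step, take a vertex $x=v_{i^\ast,j^\ast}$ with $i^\ast\notin I$; I want to show that $x$ cannot become $k$-colored at time $t+1$. In the toroidal mesh, the four neighbors of $x$ sit at rows $i^\ast,i^\ast,i^\ast-1,i^\ast+1$. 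The two horizontal neighbors share row $i^\ast\notin I$ and hence are not $k$-colored by $H(t)$. Of the two vertical neighbors, at most one can lie in $I$, because otherwise both $i^\ast-1$ and $i^\ast+1$ would be in $I$, which would force all of $i^\ast-1,i^\ast,i^\ast+1$ to be interior to $I$ against $i^\ast\notin I$ (using that $|I^c|\geq 2$). So $x$ has at most one $k$-colored neighbor, which violates the SMP-Protocol precondition that two neighbors share the color being adopted. Hence $H(t+1)$ holds, and by induction $m_{D^k}\leq m_{S^k}$. The column argument under $n_{S^k}<n-1$ is the mirror image.

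The main obstacle is extending this ``at most one $k$-neighbor'' count to the torus cordalis and torus serpentinus, where the wrap-around edges do not preserve rows and columns. For example, in the torus cordalis, the horizontal neighbor $v_{i,-1}$ is $v_{i-1,n-1}$, which sits in a different row; so a vertex just above the row-interval $I$ in column $0$ could, in principle, see two $k$-colored neighbors at the same bottom row of $I$ through the shifted wrap. To handle this I would refine the definition of the rows ``spanned'' by $S^k$ on the non-toroidal-mesh tori by following the wrap shifts (equivalently, unrolling the cordalis/serpentinus onto a universal cover) and then rerun the same neighbor-counting induction on that unrolled strip; the hypothesis $m_{S^k}<m-1$ still guarantees that two consecutive rows lie outside the spanned interval, which is all that the counting argument uses. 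This reduces the lemma for all three tori to the single inductive step above.
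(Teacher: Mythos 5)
Your first paragraph and the toroidal-mesh induction are correct, and this is surely the intended argument (note the paper itself gives no proof of this lemma; it is imported verbatim from the results of \cite{Lodi98}, so there is no in-paper proof to compare against). The counting is sound: the two horizontal neighbors of $x=v_{i^\ast,j^\ast}$ lie in row $i^\ast\notin I$, and since the complementary arc of $I$ has length at least two (this is exactly where $m_{S^k}<m-1$ is used), at most one of $i^\ast-1,i^\ast+1$ lies in $I$; a single $k$-colored neighbor never triggers the \textbf{SMP-Protocol}, so the occupied row-arc is invariant, and symmetrically for columns.

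The gap is in your last paragraph, and it is not merely a matter of carrying out the unrolling: the per-coordinate implication you are trying to transport to the torus cordalis (``if $m_{S^k}<m-1$ then $m_{D^k}\le m_{S^k}$'') is false there, so no re-indexing can make the same count succeed. Concretely, in the cordalis the vertex $v_{i,0}$ has \emph{two} neighbors in row $i-1$, namely $v_{i-1,0}$ (vertical) and $v_{i-1,n-1}$ (the horizontal wrap edge); if both are $k$-colored while $v_{i+1,0}$ and $v_{i,1}$ carry distinct colors, then $v_{i,0}$ recolors to $k$ and the row span strictly increases. This is not a pathological configuration: the paper's own minimum dynamo for the cordalis is a full row plus one vertex of the next row, which has $m_{S^k}=2<m-1$ for $m\ge 4$ and nevertheless derives the whole torus, so $m_{D^k}=m$. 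The same failure occurs for columns in the serpentinus, where $v_{0,j}$ has two neighbors in column $j+1$ (one through the vertical wrap $v_{m-1,j+1}$, one horizontal). What does survive outside the toroidal mesh is only the column half of your argument on the cordalis: there the four neighbors of $v_{i,j}$ still occupy columns $j,j,(j-1)\bmod n,(j+1)\bmod n$, so your count applies verbatim and gives $n_{D^k}\le n_{S^k}$ whenever $n_{S^k}<n-1$; this is consistent with Remark 1, which for the cordalis asserts only the column condition. So you should either restrict the strong, coordinatewise form of the lemma to the toroidal mesh (the only topology where the paper actually invokes the consequence $m_{S^k}\ge m-1$ and $n_{S^k}\ge n-1$), state the cordalis case for columns only, or read the ``and/or'' in the conclusion as a genuine disjunction, in which case the lemma carries essentially no content for the other two tori.
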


%\begin{proof}
%Indeed if $m_{S^k}<m-1$, then there are $m-m_{S^k}\geq 2$ rows
%constituting a non-$k$-block that will never turn $k$-colored.
%The reason for $n_{S^k}<n-1$ is similar.
%\end{proof}

This lemma implies that if $S^k$ is a dynamo, then $m_{S^k}\geq
m-1$ and $n_{S^k}\geq n-1$.

\begin{remark}
In a bi-colored toroidal mesh  ($|\mathcal{C}|=2$), the above
lemma holds for $m_{S^2}<m$ and/or $n_{S^2}<n$. Indeed if
$m_{S^2}<m$, then there is at least a row of $1$-colored vertices
$1\neq k$ constituting a $1$-block. As a consequence if $S^2$ is a
dynamo, then $m_{S^2}=m$ and $n_{S^2}=n$. Similarly, in a torus
cordalis we have that $n_{S^2}=n$. \label{rem1}
\end{remark}
This remark stresses that the problem for multi-colored tori under
the \textbf{SMP-Protocol} restricted to two colors is different
from the corresponding problem for bi-colored tori under reverse
simple majority rule.

Furthermore, the following lemma %2 of \cite{Lodi98}
holds for all the tori . %toroidal mesh, torus cordalis and torus serpentinus.

\begin{lemma}
Let $S^k$ be a monotone dynamo and $B^{k}_{i}$ a $k-block$ with $1
\leq i \leq t$. Then,
\begin{itemize}
\item $S^k=B^k_1\cup B^k_2\cup\dots\cup B^k_t$, with $t\geq 1$;
\item $T-S^k$ does not contain any $non$-$k$-block.
\end{itemize}
\label{l2}
\end{lemma}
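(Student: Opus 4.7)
I would prove the two bullets independently, each by contradiction, using the following underlying observation about the \textbf{SMP-Protocol}: a vertex can be recolored to a color $j$ only if at least two of its four neighbors already carry color $j$ (clause~1 of the rule forces $r(a)=r(b)=j$ and clause~2 forces all four neighbors to equal $j$).

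For the second bullet, assume a non-$k$-block $NB\subseteq T-S^k$ exists. By definition every $u\in NB$ has at least three neighbors in $NB$, hence at most one neighbor outside $NB$, and is non-$k$-colored at time~$0$. I would prove by induction on $t$ that every vertex of $NB$ is non-$k$-colored at time $t$. In the inductive step, at least three of $u$'s neighbors lie in $NB$ and are non-$k$ by the inductive hypothesis, so $u$ has at most one $k$-colored neighbor; by the observation above, $u$ cannot be recolored $k$ at time $t+1$. Thus $NB$ remains frozen at non-$k$ colors forever, contradicting that $S^k$ is a dynamo.

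For the first bullet, it suffices to show that every $v\in S^k$ has at least two neighbors in $S^k$: the connected components of the induced subgraph on $S^k$ will then be exactly the required $k$-blocks $B^k_1,\ldots,B^k_t$, with $t\geq 1$ because $S^k$ is non-empty (the \textbf{SMP-Protocol} cannot introduce a color absent from the initial configuration). Assume for contradiction that some $v\in S^k$ has at most one neighbor in $S^k$, hence at least three non-$k$ neighbors at time~$0$. If those non-$k$ neighbors realize any $3$-$1$, $2$-$1$-$1$, or $4$-$0$ majority pattern for a color $j\neq k$, the rule fires at $v$ and recolors it to $j$ at time~$1$, violating monotonicity. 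The only initial configurations around $v$ that survive this first test are the all-distinct $1$-$1$-$1$-$1$ pattern and, when $v$ has no $k$-neighbor, the balanced $2$-$2$ split.

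The main obstacle will be ruling out those two residual configurations. My plan is to iterate the dynamics: since $S^k$ is a dynamo, each non-$k$ neighbor of $v$ must eventually be recolored $k$, and every such transition requires that neighbor to first accumulate two $k$-colored neighbors of its own. Tracking these cascading updates while using monotonicity to forbid $k$-colored vertices from regressing, one shows that at some later time $t^{*}\geq 1$ the neighborhood of $v$ is unavoidably forced through a $2$-$1$-$1$ or $3$-$1$ non-$k$-majority configuration; at that moment the rule recolors $v$ away from $k$, contradicting monotonicity. A cleaner alternative is to invoke the transformation $\phi$ of Proposition~\ref{prop1}: under $\phi$, a monotone dynamo in the multi-colored torus projects onto a monotone dynamo in the bi-colored torus governed by the reverse simple majority rule, where the analogous block-decomposition result of \cite{Lodi98} applies and transfers the conclusion back.
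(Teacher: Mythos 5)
Your argument for the second bullet is sound and essentially the intended one: every vertex of a non-$k$-block keeps at least three non-$k$ neighbors at every step, so under the \textbf{SMP-Protocol} (which requires two equal-colored neighbors to fire) it can never acquire color $k$, contradicting that $S^k$ is a dynamo. The paper does not write this out — it imports the lemma from \cite{Lodi98} via the correspondence between non-$k$-blocks and simple white blocks — but your induction is exactly the right substance for that half.

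The first bullet is where the proposal breaks down, and the gap is not cosmetic. The residual case you flag — a $k$-colored vertex $v$ with one $k$-neighbor and three pairwise-distinct non-$k$ neighbors — cannot be eliminated by ``tracking the cascade'': a neighbor of $v$ may pass directly from its initial color to $k$ the moment it accumulates two $k$-colored neighbors of its own, so the neighborhood multiset of $v$ can evolve as $\{k,j_1,j_2,j_3\}\to\{k,k,j_2,j_3\}\to\{k,k,k,j_3\}\to\{k,k,k,k\}$ without ever containing two equal non-$k$ colors, and $v$ is never forced to recolor away from $k$. Your claim that the neighborhood is ``unavoidably forced through a $2$-$1$-$1$ or $3$-$1$ non-$k$-majority configuration'' is therefore unjustified, and in fact the paper's own minimum dynamo of Theorem \ref{theo:10new} (a full $k$-colored column plus a row with one node less) contains exactly such a vertex: the free endpoint of the truncated row has a single neighbor in $S^k$, so that dynamo is \emph{not} a union of $k$-blocks in the sense of the paper's definition. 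This means the first bullet cannot be established along your lines (and suggests the statement, imported verbatim from the two-color setting, does not transfer without extra hypotheses). Your ``cleaner alternative'' via the transformation $\phi$ of Proposition \ref{prop1} also fails: $\phi$ does not commute with the dynamics, since a $k$-vertex with three distinct non-$k$ neighbors is stable under the \textbf{SMP-Protocol} while its $\phi$-image — a black vertex with three white neighbors — recolors under the reverse simple majority rule. The paper uses $\phi$ only for the lower-bound/non-$k$-block correspondence, and explicitly warns (Remark \ref{rem1} and the introduction) that its rule restricted to two colors is not the rule of \cite{Lodi98}, so the block-decomposition result cannot simply be pulled back through $\phi$.
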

%\begin{proof}
% In a monotone dynamo $S^k$, $k$-colored vertices will never
%change color by definition. Moreover if $T-S^k$ contains some
%$non$-$k$-block $NB^{k}$, vertices in $NB^{k}$ will never recolor
%themselves with the color $k$. Under such a condition a
%$k$-monochromatic configuration will never be reached.
%\end{proof}

%In case $|\mathcal{C}|=2$, if $S^k$ is a monotone dynamo,
%the Theorems 15 and 16 of \cite{Lodi98} hold.

\subsection{Toroidal Mesh}

In this section we focus on any toroidal mesh $T$ of size $m
 \times n$ with $m,n \geq 2$.

\begin{lemma}
Let $B^k$ be a $k$-block in a colored toroidal mesh of size
$m\times n$. Then,
\begin{itemize} \item if $m_{B^k}=m$ and/or $n_{B^k}=n$, then $|B^k|\geq
m_{B^k}+n_{B^k}-1$; \item if $m_{B^k}\leq m-1$ and/or $n_{B^k}\leq
n-1$, then $|B^k|\geq m_{B^k}+n_{B^k}$.
\end{itemize}
\label{l3}
\end{lemma}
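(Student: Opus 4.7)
My plan is to bound $|B^k|$ via a double count of the edges inside $B^k$. Since every vertex of a $k$-block has at least two same-colored neighbors in $B^k$, the induced subgraph on $B^k$ carries at least $|B^k|$ edges. I would split these into horizontal (in-row) and vertical (in-column) edges. For row $i$ of the torus, let $r_i$ denote the number of block vertices in that row and let $a_i$ be the number of maximal arcs of cyclically consecutive block vertices, with the convention $a_i=0$ when $r_i\in\{0,n\}$ (empty row or full cycle); then the number of horizontal edges in row $i$ is exactly $r_i-a_i$. Defining $s_j,b_j$ analogously for column $j$, using $\sum_i r_i=\sum_j s_j=|B^k|$, and invoking the min-degree-two condition, I get the master inequality
\[
|B^k| \;\ge\; \sum_i a_i \;+\; \sum_j b_j.
\]

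The next step is to show that the right-hand side already contains contributions from $m_{B^k}$ rows and $n_{B^k}$ columns. A short connectivity argument suffices: if a row lying strictly inside the row-range of $R_{B^k}$ were empty, then, because all rows outside $R_{B^k}$ are also empty, the two parts of $B^k$ above and below it could not be joined in the induced subgraph, contradicting connectedness. Thus there are exactly $m_{B^k}$ nonempty rows and $n_{B^k}$ nonempty columns.

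I would then observe that $r_i=n$ can hold only if every column of the torus is used, i.e.\ only when $n_{B^k}=n$; symmetrically, $s_j=m$ requires $m_{B^k}=m$. Consequently, in the case $m_{B^k}\le m-1$ and $n_{B^k}\le n-1$ no row or column can be full, so every nonempty row contributes $a_i\ge 1$ and every nonempty column contributes $b_j\ge 1$. Substitution into the master inequality immediately gives $|B^k|\ge m_{B^k}+n_{B^k}$, which is the second statement.

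For the first statement I assume without loss of generality that $n_{B^k}=n$ and let $t$ be the number of full rows. Each full row contributes $n$ vertices, while each of the remaining $m_{B^k}-t$ nonempty rows contributes at least one, so
\[
|B^k| \;\ge\; tn + (m_{B^k}-t) \;=\; m_{B^k}+t(n-1),
\]
which is at least $m_{B^k}+n-1$ whenever $t\ge 1$; when $t=0$ the master inequality itself yields $|B^k|\ge m_{B^k}+n$. The case $m_{B^k}=m$ is handled symmetrically, completing the bound $|B^k|\ge m_{B^k}+n_{B^k}-1$. I expect the one real subtlety to be the bookkeeping around full rows and full columns: they contribute $a_i=0$ (resp.\ $b_j=0$) to the master inequality, so the arc count alone becomes too weak exactly when $m_{B^k}$ or $n_{B^k}$ saturates the torus, which is precisely what forces the weaker ``$-1$'' bound and requires the refined row-by-row count above.
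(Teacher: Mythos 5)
The paper itself does not prove this lemma: it is imported verbatim from \cite{Lodi98} (``Lemmas 1--3\dots of \cite{Lodi98}''), so your edge-counting argument is an independent, self-contained route rather than a paraphrase of anything in the text. The skeleton is sound: min-degree two gives $|E(B^k)|\geq|B^k|$, the arc bookkeeping gives exactly $|B^k|-\sum_i a_i$ horizontal and $|B^k|-\sum_j b_j$ vertical edges, hence $|B^k|\geq\sum_i a_i+\sum_j b_j$; the connectivity argument correctly shows all $m_{B^k}$ rows and $n_{B^k}$ columns of $R_{B^k}$ are occupied; and your reading of the second bullet as the conjunction $m_{B^k}\leq m-1$ \emph{and} $n_{B^k}\leq n-1$ is the only tenable one (a single full column is a $k$-block with $|B^k|=m=m_{B^k}+n_{B^k}-1$, so the disjunctive reading of that bullet would be false). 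Under that reading the second bullet is fully proved.

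There is, however, a genuine gap in the first bullet, in the sub-case $t=0$. You claim that with no full rows ``the master inequality itself yields $|B^k|\geq m_{B^k}+n$,'' but this tacitly assumes every column contributes $b_j\geq 1$, i.e.\ that there are no full \emph{columns} either --- exactly the degeneracy you yourself flagged for rows. The configuration $t=0$, $n_{B^k}=n$, with $s\geq 1$ full columns is realizable: on the $4\times 4$ torus take columns $0$ and $2$ entirely, plus $v_{0,1}$ and $v_{1,3}$ (each of these two has its horizontal neighbors in the full columns, so every vertex has degree at least $2$ and the set is connected). Here no row is full ($t=0$) but $s=2$, so $b_0=b_2=0$ and your guaranteed bounds $\sum_i a_i\geq m_{B^k}$, $\sum_j b_j\geq n-s$ deliver only $|B^k|\geq m+n-2$, short of the required $m+n-1$. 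The fix is symmetric to your own treatment of full rows: if $s\geq 1$, count column-wise to get $|B^k|\geq sm+(n_{B^k}-s)\geq m+n_{B^k}-1\geq m_{B^k}+n_{B^k}-1$, and invoke the master inequality only when there is no full line in either direction, where it indeed gives the stronger $|B^k|\geq m_{B^k}+n_{B^k}$. With that one branch added, the proof is complete.
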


%Lemma 3 of \cite{Lodi98} holds.
%\begin{proof}
%The first bound is obtained by a configuration made up of one row
%and one column of $k$-colored vertices, while the second one is
%obtained by noting that a path visiting vertices in ${B^k}$ cannot
%terminate in a vertex with three neighbors in $T-{B^k}$.
%\end{proof}

We can state that: %Theorem 9 of \cite{lodi98} holds :
\begin{theorem}
Let $S^k$ be a monotone dynamo for a colored toroidal mesh of size
$m \times n$. We have
\begin{itemize}
\item (i) $m_{S^k}\geq m-1,\; n_{S^k}\geq n-1$ \item (ii)
$|S^k|\geq m+n-2$.
\end{itemize}
\label{t9}
\end{theorem}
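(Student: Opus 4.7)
The plan is to treat the two claims separately: part (i) as a short consequence of Lemma 1, and part (ii) by combining Lemmas 2 and 3 with a case split on the number $t$ of $k$-blocks comprising $S^k$. For part (i), I would argue by contrapositive: if $m_{S^k} < m-1$ (the case $n_{S^k} < n-1$ is symmetric), then Lemma 1 implies that every $k$-colored set $D^k$ derivable from $S^k$ has $m_{D^k} \leq m_{S^k} < m$, so $D^k$ cannot coincide with $V$, contradicting the assumption that $S^k$ is a dynamo.

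For part (ii), I would first invoke Lemma 2 to write $S^k = B_1^k \cup \cdots \cup B_t^k$ as a disjoint union of $k$-blocks, noting in addition that $T \setminus S^k$ contains no non-$k$-block. Applying Lemma 3 to each block gives $|B_i^k| \geq m_{B_i^k} + n_{B_i^k} - \delta_i$, where $\delta_i \in \{0,1\}$ equals $1$ exactly when $B_i^k$ spans the full width or height of $T$. The easy case $t = 1$ reduces to a short sub-case analysis: if $m_{S^k} = m$ or $n_{S^k} = n$, combining the first bullet of Lemma 3 with part (i) already gives $|S^k| \geq m+n-2$; otherwise part (i) forces $m_{S^k} = m-1$ and $n_{S^k} = n-1$, and the second bullet of Lemma 3 yields the bound directly.

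The interesting case is $t \geq 2$, which I expect to be the main obstacle. Here I would sum the per-block estimates to obtain $|S^k| \geq \sum_i (m_{B_i^k} + n_{B_i^k}) - \sum_i \delta_i$, and then bring in two ingredients to show the right-hand side is still at least $m+n-2$: (a) part (i), which forces the bounding rectangles of the $B_i^k$ to collectively cover at least $m-1$ rows and $n-1$ columns of $T$; and (b) the second bullet of Lemma 2, which forbids any non-$k$-block in $T \setminus S^k$ and thereby restricts how widely separated the $B_i^k$ can be. The delicate point, where I expect the bulk of the work to lie, is to show that constraint (b) rules out configurations of many small disjoint blocks whose individual $-1$ savings through $\delta_i$ could otherwise push the sum below $m+n-2$.
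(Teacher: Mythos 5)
Your part (i) is sound and is exactly the observation the paper itself records immediately after Lemma~1; note that the paper never actually proves Theorem~\ref{t9} in situ --- it is one of the results imported from \cite{Lodi98} via Proposition~\ref{prop1}, the point being that under the map $\phi$ a $non$-$k$-block becomes a simple white block, so the bi-colored lower bound transfers. Your single-block case of part (ii) is also correct: with $m_{B^k}=m$ the first bullet of Lemma~\ref{l3} gives $|B^k|\geq m+n_{B^k}-1\geq m+n-2$, and with $m_{B^k}=m-1$, $n_{B^k}=n-1$ the second bullet gives $m+n-2$ directly.

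The gap is the case $t\geq 2$, which you correctly identify as the crux but do not close --- and the ingredients you propose to use are provably insufficient as stated. From (a) and from the only instances of (b) you invoke (no two consecutive $k$-free rows or columns, equivalently $m_{S^k}\geq m-1$ and $n_{S^k}\geq n-1$), one cannot derive $\sum_i\bigl(m_{B_i^k}+n_{B_i^k}\bigr)-\sum_i\delta_i\geq m+n-2$. Concretely, take $m=n=8$ and let $S^k$ be three disjoint $2\times 2$ squares occupying rows and columns $\{0,1\}$, $\{3,4\}$, $\{6,7\}$ along the diagonal. Each square is a $k$-block, the first bullet of Lemma~\ref{l2} holds, no two consecutive rows or columns are $k$-free (so $m_{S^k}=n_{S^k}=7=m-1$ and part (i) is satisfied), every $\delta_i=0$, yet $|S^k|=12<14=m+n-2$. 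This is of course not a counterexample to the theorem: the complement of those three squares is a large, irregularly shaped $non$-$k$-block (each of its vertices has at least three non-$k$ neighbors), so Lemma~\ref{l2} is violated --- but only by a $non$-$k$-block that is not of the ``two consecutive lines'' form. So the entire burden of the $t\geq 2$ case is to exploit $non$-$k$-blocks of general shape to force the blocks of $S^k$ to be few, elongated, and interlocking; a projection/counting argument of the kind you sketch does not reach $m+n-2$ without that extra structural analysis. To match the paper you would instead cite the bi-colored lower bound of \cite{Lodi98} and transfer it through Proposition~\ref{prop1}; to keep your self-contained route you must supply the missing global argument for $t\geq 2$.
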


Let us now introduce an example derived by the above mentioned
theorem. Figure \ref{fig:figA} shows an example of a monotone
dynamo $S^k$ of size $m+n-2=16$.

\begin{figure}[h]
 \centering
 \includegraphics[width=60mm]{./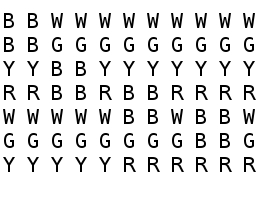}
 % Immagine1.jpg: 277x200 pixel, 72dpi, 9.77x7.06 cm, bb=0 0 277 200
\caption{A monotone dynamo of black (B) nodes of size $m+n-2$.}
 \label{fig:figA}
\end{figure}

%\begin{figure}[h!]
 %\centering
 %\subfigure[A monotone dynamo of black nodes of size $m+n-2=12$.]
   %{\includegraphics[width=55mm]{./imm/figure/} \label{fig:figA}}
 %\hspace{1mm}
 %\subfigure[A monotone dynamo of black nodes of size $m+n-2$. Other nodes satisfy the requirement.]
 %{\includegraphics[width=35mm]{./imm/figure/} \label{fig:figB}}
 %\caption{Monotone dynamos of different sizes}
%\label{fig:expIIP}
%\end{figure}

%\begin{figure}[h!]
 %\centering
 %\includegraphics[width=50mm,bb=0 0 179 174]{./images/figB.png}
 % figB.png: 248x242 pixel, 100dpi, 6.30x6.15 cm, bb=0 0 179 174
 %\caption{A monotone dynamo of black nodes of size $m+n-2$. Other nodes satisfy the requirement.}
 %\label{fig:figB}
%\end{figure}

\subsection{On the minimum size dynamos.}

We are going to derive a sufficient condition for an initial set
of $k$-colored vertices to be a monotone dynamo of size close to
the lower bound. So the question is:  how can we construct a
monotone dynamo of the minimum size?

Note that according to Proposition \ref{prop2},
 Theorem 16 of
\cite{Lodi98} gives a dynamo $S^k$ for any multi-colored torus.
Indeed $S^k$ is a $k$-block and $T-S^k$ is a multi-colored forest
such that its leaves have at most one neighbor of their same color
and three $k$-colored neighbors. Therefore no $non$-$k$-block will
be never obtained. Moreover
%Note that if $S^k$ is taken according to Theorem \ref{t9} and
%Lemma \ref{l2}, and $T-S^k$ is a colored forest according to the
%proof of Theorem 15 of \cite{Lodi98}, then $S^k$ is a monotone
%dynamo. Indeed, we are sure that vertices of $S^k$ will never
%change their color, and vertices in the forest (i.e. in $T-S^k$)
%do not form any block since vertices that are leaves have at most
%one neighbor of their same color and three $k$-colored neighbors
%(i.e. in $S^k$.). Therefore
at each step, leaves recolor themselves constituting the set
derivable by $S^k$, until all the vertices are $k$-colored. This
trivial upper bound is far from the lower bound given in Theorem
\ref{t9}.

\begin{proposition}
 If a minimum size dynamo exists then $|\mathcal{C}|\geq N$, for
$1<N\leq 3$ where $N = min(m,n)$ \label{prep:3}
\end{proposition}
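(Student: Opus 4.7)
The plan is to prove the contrapositive: if $|\mathcal{C}|<N$, then no dynamo of the minimum size $m+n-2$ can exist.

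The case $N=2$ is immediate: with a single color the torus is trivially $k$-monochromatic, so the notion of a dynamo is vacuous, and any meaningful dynamo needs at least two colors.

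For $N=3$, assume without loss of generality $m=3$ and argue by contradiction: suppose $|\mathcal{C}|=2$, say $\mathcal{C}=\{k,k'\}$, and that $S^{k}$ is a dynamo of size $n+1=m+n-2$. First I would establish a simple but crucial fact about the SMP-Protocol restricted to two colors: the rule repaints a non-$k$ vertex $x$ with color $k$ if and only if at least three of the four neighbors of $x$ are $k$-colored. This follows by running through the five possible color multisets of the 4-neighborhood drawn from $\{k,k'\}$ and checking which admit a pairing $(a,b,c,d)$ with $(r(a)=r(b))\wedge(r(c)\neq r(d))$ and $r(a)=k$ (the only surviving cases are three or four $k$'s). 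Hence any first-round recoloring to $k$ of some $x\in T-S^{k}$ requires $|N(x)\cap S^{k}|\geq 3$.

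The main step is then to show that for every $S^{k}$ satisfying the minimum-dynamo hypotheses -- $|S^{k}|=n+1$, $m_{S^{k}}\geq 2$ and $n_{S^{k}}\geq n-1$ from Theorem \ref{t9}, decomposition into $k$-blocks $B^{k}_{1}\cup\dots\cup B^{k}_{t}$ by Lemma \ref{l2}, with block sizes bounded by Lemma \ref{l3} -- every exterior vertex $x\in T-S^{k}$ has $|N(x)\cap S^{k}|\leq 2$. I would do this by splitting on the pair $(m_{S^{k}},n_{S^{k}})\in\{2,3\}\times\{n-1,n\}$ and, in each subcase, using Lemma \ref{l3} to enumerate the possible block shapes compatible with the tight size $n+1$ and with the block-connectivity requirement that every $S^{k}$-vertex has at least two in-block neighbors. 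These combined constraints leave only very thin configurations (essentially a $2\times 2$ corner when $m=n=3$; for $n\geq 4$ most subcases turn out to be inconsistent with block-connectivity and hence vacuous), for which a direct neighborhood inspection shows that no exterior vertex can see three $S^{k}$-vertices.

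Combining the two steps, no non-$k$ vertex can be recolored in the first round of the SMP-Protocol, so the initial coloring is a fixed point and $S^{k}$ is not a dynamo, contradicting the assumption and yielding $|\mathcal{C}|\geq 3=N$. The main obstacle is the case-by-case geometric verification in the second step, which demands a careful interplay between the block-connectivity condition of Lemma \ref{l2} and the span/size bounds of Lemma \ref{l3}.
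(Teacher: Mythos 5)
Your argument is correct in outline and reaches the same conclusion as the paper, but by a genuinely different (and in fact more careful) mechanism for the case $N=3$. The paper disposes of $N=3$ in one sentence: it takes $S^k$ to be the row-plus-column configuration and asserts that $T-S^k$ is a $non$-$k$-block. That assertion is not literally true under the paper's own definition when $m=3$: the ``corner'' vertices of the complement, such as $v_{1,1}$, have \emph{two} $k$-colored neighbors and hence only two neighbors inside $T-S^k$, so the definition (which requires at least three) fails there. Your route repairs exactly this: by first proving that with $|\mathcal{C}|=2$ the \textbf{SMP-Protocol} recolors a vertex to $k$ if and only if at least \emph{three} of its neighbors are $k$-colored (a 2--2 split admits no admissible ordering $(a,b,c,d)$), you only need the weaker per-vertex bound $|N(x)\cap S^k|\leq 2$, which does hold at the corners; this is the observation the paper's $non$-$k$-block argument cannot deliver. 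You also quantify over all block decompositions permitted by Lemma \ref{l2} and Lemma \ref{l3} rather than assuming the row-plus-column shape, which is the honest thing to do. The price is the case analysis in your ``main step,'' which you only sketch; it should be carried out, although it collapses quickly (for $m=3$ the block-connectivity requirement kills a full row plus a stray vertex and a full column plus a partial row, leaving essentially the row-plus-column pattern and, for $m=n=3$, the $2\times 2$ block, both of which are readily checked to give $|N(x)\cap S^k|\leq 2$ everywhere). Two small points to add for completeness: to conclude a genuine fixed point you should also note that no vertex of $S^k$ flips to $k'$ (it has at least two $k$-neighbors since $S^k$ is a union of $k$-blocks, hence at most two $k'$-neighbors, below the threshold of three), or simply observe that the set of $k$-colored vertices can never grow, which already contradicts $S^k$ being a dynamo; and your treatment of $N=2$ is the trivially correct reading of the claim $|\mathcal{C}|\geq 2$, whereas the paper's $N=2$ paragraph is really a side computation comparing the achievable minima for two versus more colors.
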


\begin{proof}
Let $N=\min(m,n)$. We notice that there is a link between the
number of colors of $\mathcal{C}$ and the size of $T$ related with the existence of a minimum size
dynamo. Indeed we are looking for a tight bound of the size of a
dynamo. Theorem \ref{t9} states that a necessary condition for
$S^k$ to be a monotone dynamo is that $|S^k|\geq m+n-2$. Let us consider $N>1$, since otherwise we have a dynamo only if
$|\mathcal{C}|=1$. %(If the initial configuration is not monochromatic, then
%no dynamo exists:  every node coincides with two of its neighbors,
%and so the initial configuration will never change.)
Let $N=2$ and without loss of generality we can suppose $N=n$; for
two colors by Remark \ref{rem1} we have that $m_{S^2}=m$ and
$n_{S^2}=2$. Then $|S^2|=m+1=\lceil 2m+1/2\rceil$, and this bound
coincides with the lower bound of Theorem 16 of \cite{Lodi98}. For
more than two colors a column of $k$-colored vertices is a dynamo
of size $m$. If $N=3$, two colors are not enough, since vertices outside a $k$-colored row and column form a
$non$-$k$-block.
\end{proof}

% We can summarize this discussion as follows: {\it
%if a minimum size dynamo exists then $|\mathcal{C}|\geq N$, for
%$1<N\leq 3$}.

In general we can state the following new theorem that generalizes
and extends some ideas of Theorems 10 and 15 of \cite{Lodi98}. Let
us stress that the bound is tight.
\begin{theorem}
In a multi-colored toroidal mesh of size $m\times n$ with
$|\mathcal{C}|\geq 4$, let $S^k$ be constituted of a $k$-colored
column (row) and a $k$-colored row (column) with one node less;
if, for all $k'\in \mathcal{C}-\{k\}$, $S^{k'}$ is a forest and,
for every vertex $x$ in $V^{k'}$ the vertices in $N(x)\setminus
(V^{k'}\cup V^k)$ have different colors, then $S^k$ is a minimum
size monotone dynamo. \label{theo:10new}
\end{theorem}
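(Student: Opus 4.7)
The plan is to establish the theorem in three movements: counting, monotonicity, and convergence. A full $k$-colored column contributes $m$ vertices and a $k$-colored row with one node less contributes $n-1$ vertices, and since the two sets meet at exactly the cell where the column crosses the row, $|S^k|=m+(n-1)-1=m+n-2$. This matches the lower bound of Theorem \ref{t9}, so once $S^k$ is shown to be a dynamo its minimality follows at once.

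For monotonicity I would classify each vertex of $S^k$ by the number of $k$-neighbors it has inside $S^k$. A column vertex has its two column-neighbors in $S^k$, so applying the SMP-Protocol to it yields at most a vacuous recolor to $k$; the same reasoning covers the row vertices not adjacent to the removed cell. For the two row vertices flanking the removed cell, which have only one $k$-colored neighbor in $S^k$, I would argue that no two of their three non-$k$ neighbors can share a color: if two of them shared a color $k'$, then applying the distinctness hypothesis on non-$k'$-non-$k$ neighbors at each of those $k'$-vertices, together with the forest hypothesis on $V^{k'}$, would force a coloring inconsistency in the local neighborhood of the endpoint. Hence the SMP rule cannot fire against any vertex of $S^k$.

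For convergence I would argue by induction that the $k$-region strictly expands each round until it fills $T$. In the first round every ``corner'' vertex adjacent to both the row and the column has two $k$-neighbors, and by the distinctness hypothesis its two remaining non-$k$ neighbors have different colors, so the SMP rule fires and the corner recolors to $k$; the removed row cell similarly acquires $k$, because both its row-neighbors are $k$-colored and its two column-neighbors have different colors by the hypothesis. After one round the $k$-region contains a complete row and a complete column, forming a $k$-block. I would then iterate: at each subsequent round, the boundary layer of the current $k$-region gains two $k$-neighbors per new vertex, and the forest and distinctness hypotheses prevent any non-$k$ vertex from having two same-colored non-$k$ neighbors that would block propagation. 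The process therefore terminates with a $k$-monochromatic torus.

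The main obstacle will be the boundary analysis around the removed cell: the distinctness hypothesis is stated for neighbors of non-$k$ vertices and does not directly control the three non-$k$ neighbors of the $S^k$ endpoints. I would thread the condition through each $k'$-neighbor of the endpoint; the assumption $|\mathcal{C}|\ge 4$ is essential here, because the counting consequence of the distinctness hypothesis limits to at most $|\mathcal{C}|-2$ how many neighbors of a $k'$-vertex can lie outside $V^{k'}\cup V^k$, and this restriction combined with the forest structure forces the three endpoint neighbors into three distinct colors. Once these boundary checks are carried out, the remainder of the proof reduces to a standard inductive flood-fill.
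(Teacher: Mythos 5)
Your overall architecture is the same as the paper's: the count $m+(n-1)-1=m+n-2$ matches the lower bound of Theorem~\ref{t9}, the forest/distinctness hypotheses are invoked to show that non-vacuous recoloring can only be toward $k$, and a corner-to-center flood-fill finishes the argument. The genuine gap is precisely in the step you flag as ``the main obstacle'' and then claim to resolve: the assertion that the hypotheses force the three non-$k$ neighbors of a row endpoint into three distinct colors. This does not follow. Take $S^k$ on column $0$ and on row $0$ minus $v_{0,n-1}$, and consider the endpoint $v_{0,n-2}$, whose non-$k$ neighbors are $v_{0,n-1}$, $v_{1,n-2}$ and $v_{m-1,n-2}$. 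Color $v_{0,n-1}$ and $v_{1,n-2}$ with the same color $k_2$ and $v_{m-1,n-2}$ with $k_3\neq k_2$. These two $k_2$-cells are not adjacent, so $S^{k_2}$ can still be a forest, and the distinctness hypothesis applied at each of them constrains only that cell's \emph{own} neighborhood outside $V^{k_2}\cup V^k$ --- it says nothing about the other $k_2$-cell, which is a neighbor of the $k$-colored endpoint, not of the $k_2$-vertex. Under this coloring $v_{0,n-2}$ sees two $k_2$-neighbors while its remaining two neighbors carry distinct colors ($k$ and $k_3$), so the SMP rule fires and $v_{0,n-2}$ leaves $S^k$: monotonicity fails. ``Threading the condition through each $k'$-neighbor'' cannot close this, because the hypothesis is never applied at a $k$-colored vertex, and the counting remark about $|\mathcal{C}|-2$ available colors restricts nothing when only two neighbors need to collide.

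A second, related overclaim sits in your convergence step: the distinctness hypothesis does not prevent a frontier vertex $x\in V^{k'}$ from having two $k'$-colored neighbors (the forest condition forbids only cycles in $S^{k'}$, not a path or a star through $x$), and a vertex with two $k$-neighbors and two $k'$-neighbors is exactly the $2$--$2$ tie in which, by the paper's stated convention, the node does \emph{not} recolor; the flood-fill can therefore stall at a corner such as $v_{1,1}$ if $v_{2,1}$ and $v_{1,2}$ share its color. To repair the proof you must impose (or read off from the intended witness configuration of Figure~\ref{fig:figB}) explicit local conditions at the cells adjacent to $S^k$ --- in particular at the two ends of the truncated row and at the cells diagonal to the crossing --- rather than deriving them from the stated hypotheses. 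In fairness, the paper's own proof asserts the key property (``any vertex will recolor itself just when at least two of its neighbors are $k$-colored'') without confronting these boundary cases either, so your proposal reproduces, rather than introduces, the weak point; but as written the resolution you offer for it is not valid.
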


\begin{proof}
Indeed $S^k$ is taken according to Theorem \ref{t9} and Lemma
\ref{l2}. Since $S^{k'}$ is a forest, its leaves have at most one
neighbor that is $k'$-colored. In addition,  all the other
neighbors that are not in $V^k$ have different colors. As a
consequence, any vertex will recolor itself just when at least two
of its neighbors are $k$-colored. This prevents to construct any
block during the applications of the \textbf{SMP-Protocol}.
Without loss of generality, let $S^k$ lay on row 0 and column 0
(see Figure \ref{fig:figB} as an example). 
At the first step vertices in positions $(1,1),\; (0,n-1),\; (m-1,1)$ recolor with
$k$-color. Then, $(2,1),\; (1,2),\;(1,n-1),\; (m-2,1),\; (m-1,2),\;(m-1,n-1)$,
and so on until all the vertices are $k$-colored.
\end{proof}

%\begin{figure}[h!]
%\begin{center}
%5 5 1 1 1 1 1 1 1 1 1\\
%5 5 4 4 4 4 4 4 4 4 4\\
%3 3 5 5 3 3 3 3 3 3 3\\
%2 2 5 5 2 5 5 2 2 2 2\\
%1 1 1 1 1 5 5 1 5 5 1\\
%4 4 4 4 4 4 4 4 5 5 4\\
%3 3 3 3 3 2 2 2 2 2 2\\
%\caption{A monotone dynamo of black (5) nodes of size $m+n-2$.}
 %\label{fig:figB}
%\end{center}
%\end{figure}

\begin{center}
\begin{figure}[h!]
 \centering
 \includegraphics[width=35mm,bb=0 0 179 174]{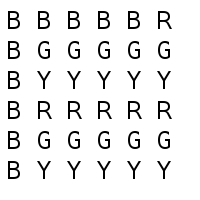}
  %figB.png: 248x242 pixel, 100dpi, 6.30x6.15 cm, bb=0 0 179 174
 \caption{A monotone dynamo of black nodes of size $m+n-2$. Other nodes satisfy the requirement.}
 \label{fig:figB}
\end{figure}
\end{center}

%B B B B B R
%B G G G G G
%B Y Y Y Y Y
%B R R R R R
%B G G G G G
%B Y Y Y Y Y

%\begin{center}
%\begin{figure}[h!]
 %\centering
 %\includegraphics[width=45mm,bb=0 0 281 281]{./imm/figure/figC.png}
 % figC.png: 390x390 pixel, 100dpi, 9.91x9.91 cm, bb=0 0 281 281
 %\caption{Black nodes do not constitute a dynamo.}
 %\label{fig:figC}
%\end{figure}
%\end{center}

%\begin{figure}[h!]
 %\centering
 %\subfigure[A monotone dynamo of black nodes of size $m+n-2$. Other nodes satisfy the requirement.]
   %\includegraphics{[width=35mm]{./imm/figure/figB.png} \label{fig:figB}}
  %\hspace{1mm}
 %\subfigure[Black nodes do not constitute a dynamo.]
  %\includegraphics{[width=45mm]{./imm/figure/figC.png}
%\hspace{1mm}\label{fig:figC}}
 %\caption{Opinion Trend under Peer Pressure without Black Zone}
%\label{fig:expIIP}
%\end{figure}

%\begin{center}
%\begin{figure}[h!]
 %\centering
 %\includegraphics[width=250mm,bb=0 0 1550 662]{figB.ps}
  %2.png: 1550x662 pixel, 72dpi, 54.68x23.35 cm, bb=0 0 1550 662
 %\caption{A monotone dynamo of black nodes of size $m+n-2$. Other nodes satisfy the requirement.}
 %\label{fig:figB}
%\end{figure}\end{center}

%\begin{figure}[h!]
 %\centering
 % \includegraphics[width=40mm]{figD.png} 
%\label{fig:figD}
 %\caption{Black nodes do not constitute a dynamo.}
%\end{figure}

\begin{figure}[h!]
 \centering
 \includegraphics[width=40mm]{./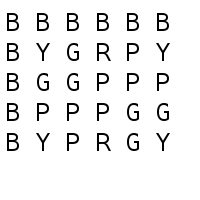}
 % figD.png: 224x186 pixel, 100dpi, 5.69x4.72 cm, bb=0 0 161 134
 \caption{Black  nodes do not constitute a dynamo}
 \label{fig:figD}
\end{figure}

%B B B B B B
%B Y G R P Y
%B G G P P P
%B P P P G G
%B Y P R G Y

%The pattern of four colors in rows $1$ to $m-1$ shown in Figure
%\ref{fig:figD} can be repeated several times in order to obtain 
%a minimum size dynamo in a toroidal mesh of any size. 
%Moreover, if $N\geq 4$ less than four colors can not be arranged 
%in any way in order to have an initial configuration leading to a monochromatic 
%fixed point. 

The pattern of four colors in rows $1$ to $m-1$ shown in Figure
\ref{fig:figB} can be repeated several times in order to obtain
a minimum size dynamo in a toroidal mesh of any size. 
By Proposition \ref{prep:3} at least three colors are necessary to have a dynamo, and as we note above
four colors are sufficient to provide a minimum size dynamo. 
Moreover, if $N\geq 4$ less than four colors can not be arranged in order
to satisfy the requirements of Theorem \ref{theo:10new} to have an initial
configuration leading to a monochromatic fixed point.

%So the question is: how many
%colors are necessary to have a dynamo?
%
%\begin{corollary}
% Let $N=\min(m,n)$.
%If a minimum size dynamo exists then:
%\begin{itemize} \item $|\mathcal{C}|=N$, for $1<N\leq
%3$; \item $|\mathcal{C}|=4$, for $N\geq 4$.
%\end{itemize}
%\end{corollary}
%
%\begin{proof}
%If $N=1$, and the initial configuration is not monochromatic, then
%no dynamo exists: indeed every node coincides with two of its
%neighbors, and so the initial configuration will never change.
%Case $N=2$ is trivial. If $N=3$, two colors are not enough, since
%vertices outside a $k$-colored row and column forms a
%non-$k$-block.
%Finally, if $N\geq 4$, four colors disposed as in
%Figure \ref{fig:figB} are enough, whereas three colors are not.
Indeed let $k_1$ be the color of a row and a column. If $v_{i,j}$
is colored with $k_2$, in order to minimize the number of colors
consider the following configuration: let $v_{i-1,j}$ and
$v_{i,j-1}$ have color $k_2$: then $v_{i-1,j-1}$ must have a
different color $k_3$ but since it has two neighbors having the
same color $k_2$, this configuration does not verify Theorem
\ref{theo:10new}. So without loss of generality let $v_{i-1,j}$
have color $k_2$ and $v_{i,j-1}$ has color $k_3$. Symmetrically,
with the same arguments we deduce that the color of $v_{i,j+1}$
cannot be $k_2$. In addition color of $v_{i,j+1}$ cannot be $k_3$,
because otherwise $v_{i,j}$ has two neighbors of color $k_3$
contradicting the requirement in Theorem \ref{theo:10new}. Hence
we deduce that $v_{i,j+1}$ is $k_4$-colored.

The same considerations hold in case
of symmetrical configurations.
%\end{proof}

The requirement of Theorem \ref{theo:10new} cannot be relaxed to
following: {\it for every vertex $x$ in $V^{k'}$ the vertices in
$N(x)\setminus (V^{k'}\cup V^k)$ have different colors, \bf{or} if
any two, say $u,w$ have
the same color, then $y,z\in N(x)\cap V^{k'}$ exist.}\\

In this case it is possible that one of the neighbors of the same
color of $x$ recolors itself by assuming $k$, so that $x$ recolors
itself with $r(u)$. This may lead to a $r(u)$-block. Moreover we
can have configurations that do not allow the propagation of the
color $k$ to all the nodes of $T$. In Figure \ref{fig:figC} we get
an example where no recoloring can arise.

\begin{figure}[h!]
 \centering
 \includegraphics[width=60mm]{./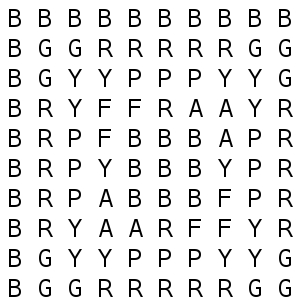}
 % figC.png: 390x390 pixel, 100dpi, 9.91x9.91 cm, bb=0 0 281 281
\caption{Black nodes do not constitute a dynamo.}
 \label{fig:figC}
\end{figure}

%B B B B B B B B B B
%B G G R R R R R G G
%B G Y Y P P P Y Y G
%B R Y F F R A A Y R
%B R P F B B B A P R
%B R P Y B B B Y P R
%B R P A B B B F P R
%B R Y A A R F F Y R
%B G Y Y P P P Y Y G
%B G G R R R R R G G

Note that such a configuration cannot arise with the constraints
of Theorem \ref{t9}: if $r(v_{2,2})=r(v_{1,2})=r(v_{2,1})$, then
$r(v_{1,1})$ must be different from $r(v_{2,2})$ otherwise we get
a $r(v_{1,1})$-block. But constraints in the theorem forbid two
neighbors of $v_{2,2}$ to have the same color. The same
considerations hold for the vertices $v_{2,n-2}, \;v_{m-2,2}$ and
$v_{m-2,n-2}$.

%A smaller example is illustrated in Figure \ref{fig:figD} under
%the hypothesis that any node having two $k'$-colored neighbors and
%two $k''$-colored neighbors does not change its color.
%

%\begin{center}
%\begin{figure}[h!]
 %\centering
 %\includegraphics[width=25mm,bb=0 0 161 134]{figD.png}
 % figD.png: 224x186 pixel, 100dpi, 5.69x4.72 cm, bb=0 0 161 134
 %\caption{Black nodes do not constitute a dynamo.}
 %\label{fig:figD}
%\end{figure}
%\end{center}

%Another example is obtained by arranging colors of the vertices as
%a chessboard. This configuration is interesting because it does
%not terminate neither to a monochromatic fixed point or to a
%static distribution of colors but it can pulsate or remain
%unchanged by the initial configuration.
%
%\textbf{The final terminations are related with the scheduling of the entities, meaning that if the protocol is exectuted in a synchronous environment with only to color it will not terminate by pulsing. On the other hand, when the execution is asynchronous would emerge the formation of a series of $non$-$k-block$ .................
%}
%This set of examples leads to another interesting issue to be addressed, which is {\textit Fix colors of some vertices to prevent the system from reaching monochromatic configuration, namely blocking configurations.}\\

\subsection{Torus Cordalis and Torus Serpentinus}

In order to complete our analysis, in this section we are going to
present the remaining cases. Here we provide lower and upper
bounds on the number of $k$-colored nodes leading to new theorems
where the interaction topology of \textbf{SMP-Protocol} is either
a torus cordalis or a torus serpetinus.

\begin{theorem}
 Let $S^k$ be a monotone dynamo for a colored torus cordalis.  Then  $|S^k| \geq n+1$.
\end{theorem}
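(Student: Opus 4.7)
The plan is to argue by contradiction: assume $|S^k|\le n$. There is a painless case to dispose of first. If $m\ge n+2$, then by the lemma established earlier any dynamo has $m_{S^k}\ge m-1$, and since Lemma \ref{l2} writes $S^k$ as a disjoint union of $k$-blocks---each of which is connected and therefore meets every row of its row-span---we obtain $|S^k|\ge m_{S^k}\ge m-1\ge n+1$, contradicting the assumption. Hence we may henceforth assume $m\le n+1$ and decompose $S^k=B_1\cup\cdots\cup B_t$ as in Lemma \ref{l2}.

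The heart of the argument is a structural claim specific to the torus cordalis: any $k$-block $B$ with $n_B=n$ satisfies $|B|\ge n+1$. I would prove it by contradiction. If $|B|=n$, then $B$ contains exactly one vertex per column, say $B=\{v_{r(j),j}\}_{j=0}^{n-1}$; since the column-neighbors of $v_{r(j),j}$ are outside $B$, both of the two required in-neighbors of $v_{r(j),j}$ must be row-neighbors. The ordinary horizontal edges then force $r(0)=r(1)=\cdots=r(n-1)$. However, the defining feature of the torus cordalis---that $v_{i,n-1}$ is connected to $v_{(i+1)\bmod m,0}$, so the left row-neighbor of $v_{r(0),0}$ is $v_{(r(0)-1)\bmod m,\,n-1}$---forces $r(n-1)\equiv r(0)-1\pmod m$, a contradiction.

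With this claim in hand, no $B_i$ can have $n_{B_i}=n$; each therefore fits in a window of $n-1$ consecutive columns inside which the cord-wrap is invisible, so Lemma \ref{l3} applies and gives $|B_i|\ge m_{B_i}+n_{B_i}$ (with $-1$ if $m_{B_i}=m$), and in particular $|B_i|\ge n_{B_i}+2$ since $m_{B_i}\ge 2$. When $t=1$ we have $n_{B_1}=n-1$ and $m_{B_1}\ge m-1\ge 2$, so $|B_1|\ge m+n-2\ge n+1$, done. When $t\ge 2$ I would invoke the second clause of Lemma \ref{l2}: any two consecutive wholly non-$k$ columns would themselves constitute a non-$k$-block (three in-neighbors for each vertex, using column-cyclicity and the row-neighbor inside the two-column strip), so the column-gaps between successive $B_i$'s have size at most $1$; summing the per-block estimate $|B_i|\ge n_{B_i}+2$ and bounding $\sum_i n_{B_i}\ge n-t$ through this gap restriction yields $\sum_i|B_i|\ge(n-t)+2t=n+t\ge n+2$, again contradicting $|S^k|\le n$. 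The hard part is the middle paragraph---the full-width $k$-block claim that genuinely distinguishes the torus cordalis from the plain toroidal mesh and is the reason the lower bound drops from $m+n-2$ to $n+1$; the remaining case analysis is a straightforward accounting exercise using Lemmas \ref{l2} and \ref{l3}.
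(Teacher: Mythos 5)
Your route is genuinely different from the paper's: the paper never proves this theorem directly, but inherits it from the bi-colored case via Proposition~\ref{prop1} and the corresponding result of \cite{Lodi98}, whereas you give a self-contained block-counting argument. Your central structural claim is correct and well targeted: a $k$-block meeting all $n$ columns with only one vertex per column would need both row-neighbors of each vertex inside the block, and the cord edge $v_{i,n-1}\sim v_{(i+1)\bmod m,0}$ makes the resulting chain of equalities $r(0)=\dots=r(n-1)$ incompatible with $r(n-1)\equiv r(0)-1\pmod m$; likewise two consecutive empty columns do form a $non$-$k$-block in the torus cordalis (the cord keeps every vertex of the two-column strip with three in-neighbors), so Lemma~\ref{l2} caps the gaps at one column. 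These are exactly the two facts that drop the bound to $n+1$, and your final accounting $\sum_i|B_i|\ge (n-t)+2t=n+t$ is sound.

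There are, however, two concrete soft spots. First, in the case $m\ge n+2$ you claim $|S^k|\ge m_{S^k}$ because each block meets every row of \emph{its own} row-span; but $m_{S^k}$ is the row-span of the \emph{union}, and two small blocks placed in distant rows can have $m_{S^k}$ much larger than $|S^k|$ (e.g.\ two $2\times 2$ blocks in rows $0$--$1$ and $10$--$11$ of a tall torus). The step is false as stated. Fortunately the case split is unnecessary: your main argument in the last two paragraphs never uses $m\le n+1$, so you should simply delete the first case. Second, the transfer of Lemma~\ref{l3} to the torus cordalis via ``a window of $n-1$ consecutive columns inside which the cord-wrap is invisible'' does not hold: a cyclic window of $n-1$ columns omits only one column and can perfectly well contain both columns $n-1$ and $0$, hence the cord edges, and Lemma~\ref{l3} is only stated for the toroidal mesh. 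You should instead prove $|B|\ge n_B+2$ directly: $B$ has a vertex in every column of its span; a column containing a single vertex of $B$ forces both of that vertex's row-neighbors into $B$, so the extreme (leftmost and rightmost) columns of a span with $n_B\le n-1$ must each contain at least two vertices of $B$, giving $|B|\ge n_B+2$ whenever $n_B\ge 2$ (and $|B|=m$ for a single full column). With these repairs the argument goes through, modulo the degenerate cases $m=2$ that the paper itself does not treat carefully.
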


%It is interesting to note that, also here, both the theorem and the proof strategy are quite similar to the one reported in \cite{Lodi98}.
%
%\begin{proof}
% From Lemma 2 let $S^k=B^k_1\cup
%B^k_2\cup\dots\cup B^k_t$, with $t\geq 1$. If one of these blocks
%$B^k_i$ has $n_{B^k_i} = n$, then $|S^k| \geq |B^k_i| \geq n+1$
%because at least a further $k$-colored vertex should be added to
%any single $k$-colored row to form a $k$-block.
% In the case where for each $i$ we have $n_{B^k_i} < n $, in order to avoid the
%  presence of a non-$k$-block composed by two adjacent columns we must have also
%  $n_{B^k_{1}}+ \ldots + n_{B^k_{t}} \geq n-t$.
% For each $B^k_{i}$ to be a $k$-block we must have $|B^k_{i}|\geq n_{B^k_{i}}+2$ (as in the case of Lemma 3).
% As a consequence we get that
% $|S^k| = |B^k_1|+ \ldots + |B^k_{t}| \geq n_{B^k_{i}} + \ldots + n_{B^k_{t}} + 2t \geq n+t.$
%\end{proof}

By inspection we can show that the following choice for $S^k$
leads to a monotone dynamo whose cardinality  matches the lower
bound.

\begin{theorem}
In a multi-colored torus cordalis of size $m\times n$ with
$|\mathcal{C}|\geq 4$, let $S^k$ be constituted of $n+1$
$k$-colored vertices placed in a whole row $i$, and one vertex in
row $(i+1)$ mod $m$, column $0$; if, for all $k'\in
\mathcal{C}-\{k\}$, $S^{k'}$  is a forest and, for every vertex
$x$ in $V^{k'}$ the vertices in $N(x)\setminus (V^{k'}\cup V^k)$
have different colors, then $S^k$ is a minimum size monotone
dynamo. \label{theo:12new}
\end{theorem}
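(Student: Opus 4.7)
The proof strategy mirrors that of Theorem~\ref{theo:10new}, adapted to the torus cordalis topology. The preceding theorem already gives the lower bound $|S^k|\ge n+1$, so it remains to show that the prescribed $S^k$ is itself a monotone dynamo. I would split the argument into (a) verifying that no non-$k$-block ever arises and (b) describing a wave of recolorings that drives every vertex to color $k$.

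\textbf{Part (a)} is essentially identical to the corresponding step in Theorem~\ref{theo:10new}: since each $S^{k'}$ is a forest, its leaves have at most one $k'$-colored neighbor, and by the distinct-color hypothesis the non-$k$ neighbors of any $k'$-colored vertex carry pairwise distinct colors. Consequently no vertex ever recolors to a color other than $k$, and combined with Lemma~\ref{l2} this means that $S^k$ qualifies as a monotone dynamo as soon as flooding is established in part (b).

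\textbf{Part (b)} is where the cordal edges enter the picture and is the core of the argument. Taking $i=0$ without loss of generality, $S^k$ consists of row $0$ together with $v_{1,0}$. The cordal edge between $v_{0,n-1}$ and $v_{1,0}$ already provides $v_{1,0}$ with a second $k$-colored neighbor inside row $0$ (besides $v_{0,0}$); this is precisely the reason one extra vertex suffices here whereas two are needed in the toroidal mesh. Consequently $v_{1,1}$ has two $k$-colored neighbors $v_{0,1}$ and $v_{1,0}$ and, by hypothesis, two remaining neighbors of different colors, so it recolors to $k$ at step~$1$. An induction on $j$ shows that $v_{1,j}$ recolors at step~$j$, making row $1$ monochromatic after $n-1$ steps. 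The cordal edge between $v_{1,n-1}$ and $v_{2,0}$ then seeds row $2$ in the same way, and the process repeats cyclically through rows $3,\dots,m-1$. A symmetric backward cascade, started by the chord between $v_{m-1,n-1}$ and $v_{0,0}\in S^k$, propagates $k$ right-to-left along row $m-1$ and then up through rows $m-2, m-3,\dots$, so that the two cascades together cover all rows.

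\textbf{Main difficulty.} The most delicate point is to ensure that, as more and more vertices flip to $k$, the ``other two neighbors have different colors'' clause of the \textbf{SMP-Protocol} keeps holding at each subsequent recoloring. This will be handled by observing that newly $k$-colored neighbors only augment the pair realizing $r(a)=r(b)=k$, while the still-uncolored (non-$k$) neighbors of any vertex inherit the distinct-color property from the forest and coloring hypotheses on each $S^{k'}$; hence the rule continues to fire in the intended direction throughout the cascade.
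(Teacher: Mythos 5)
Your proposal follows essentially the same route as the paper: minimality comes from the preceding lower-bound theorem, and the dynamo property is argued exactly as in Theorem~\ref{theo:10new} --- the forest/distinct-color hypothesis rules out recolorings to any color other than $k$, and the chord-driven, row-by-row two-sided cascade you describe is precisely the propagation pattern the paper itself uses (it justifies Theorem~\ref{theo:12new} only ``by inspection'' and spells out this wave in the later round-counting theorem). The one point you gloss over --- a vertex with two $k$-colored neighbors does not fire under the \textbf{SMP-Protocol} if its other two neighbors both carry its \emph{own} color $k'$, a $2$--$2$ tie the stated hypothesis does not exclude since it only constrains neighbors outside $V^{k'}\cup V^k$ --- is glossed over in exactly the same way in the paper's proof of Theorem~\ref{theo:10new}, so your argument sits at the paper's own level of rigor rather than exhibiting a gap the paper avoids.
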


Let $N$ be $min (m, n)$. Now we can similarly prove that:

\begin{theorem}
 Let $S^k$ be a monotone dynamo for a colored torus serpentinus. Then $|S^k| \geq
 N+1$.
\end{theorem}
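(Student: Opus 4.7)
The plan is to follow the strategy of the preceding cordalis theorem, adapted to exploit the extra symmetry of the serpentinus. I would first invoke the first lemma above (the bounding-rectangle condition) to conclude that any monotone dynamo $S^k$ satisfies $m_{S^k} \geq m-1$ and $n_{S^k} \geq n-1$, and then Lemma~\ref{l2} to decompose $S^k$ as a disjoint union of $k$-blocks $B^k_1,\ldots,B^k_t$ whose complement $T \setminus S^k$ contains no non-$k$-block.

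The structural fact that distinguishes the serpentinus from the cordalis is that a single column is no longer a $k$-block: the column wraparound $v_{m-1,j} \leftrightarrow v_{0,(j-1)\bmod n}$ of the serpentinus breaks the column cycle, so the endpoint vertices $v_{0,j}$ and $v_{m-1,j}$ have only one in-column neighbor each. This exactly mirrors the situation for rows in the cordalis. Consequently, the argument that yielded the bound $n+1$ in the cordalis, via a ``row plus one extra vertex'' $k$-block structure, transfers verbatim to the column direction, yielding a dual lower bound $m+1$ via a ``column plus one extra vertex'' $k$-block that exploits the column-serpentine edge.

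Taking the minimum of the two bounds gives $|S^k| \geq \min(n+1, m+1) = N+1$. Any monotone dynamo in the serpentinus must contain at least one $k$-block that exploits one of the two serpentine wraparounds; otherwise the bounding-rectangle constraints and the absence of non-$k$-blocks in the complement cannot be simultaneously satisfied. The cheapest such block has size $n+1$ if row-directed and $m+1$ if column-directed, so $|S^k| \geq N+1$.

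The step I expect to be the main obstacle is ruling out configurations consisting of many small $k$-blocks (for instance, $2 \times 2$ squares) arranged so as to jointly meet the bounding-rectangle condition while keeping the total size below $N+1$. I would handle this by arguing that any such arrangement must either already total at least $N+1$ vertices, or else leave two consecutive non-$k$-colored rows or columns somewhere in the complement, which would constitute a forbidden non-$k$-block and contradict Lemma~\ref{l2}.
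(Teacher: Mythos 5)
Your route diverges from the paper's: the paper obtains this bound by reduction, via Proposition \ref{prop1}, to the corresponding lower bound for bi-colored tori under the reverse simple majority rule (Theorem 13 of the cited work), whereas you attempt a direct structural argument. A direct argument is legitimate in principle, but yours has a genuine gap at its quantitative core. Your key claim is that any monotone dynamo must contain a $k$-block ``exploiting'' one of the two serpentine wraparounds, and that the cheapest such block has size $n+1$ (row-directed) or $m+1$ (column-directed). Neither half holds up. The second half is simply false: the set $\{v_{i,n-1},\,v_{i+1,n-1},\,v_{i+1,0},\,v_{i+2,0}\}$ straddles the row seam, and each of its four vertices has two neighbors inside it (via the wraparound edges $v_{i,n-1}\sim v_{i+1,0}$ and $v_{i+1,n-1}\sim v_{i+2,0}$ together with two column edges), so it is a $k$-block of size $4$ that exploits the wraparound. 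The first half is asserted, not proved. There is also a logical misstep in the middle of the argument: if the cordalis reasoning really ``transferred verbatim'' to give two simultaneous lower bounds $n+1$ and $m+1$, the correct conclusion would be $|S^k|\geq\max(m,n)+1$, which is contradicted by the matching construction of size $N+1$ in Theorem \ref{theo:14new}. What you actually need is a disjunction --- every dynamo is expensive in at least one of the two directions --- and that is precisely the part you have not established.

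Your closing paragraph correctly identifies the hard case (a union of many small $k$-blocks), but the tool you propose for it is too weak. Forbidding two consecutive $k$-free rows and two consecutive $k$-free columns only forces $S^k$ to meet roughly every other row and every other column, which yields a bound of order $\max(m,n)/2$, not $N+1$; moreover, non-$k$-blocks are not limited to pairs of consecutive lines, so that case analysis is also incomplete. To close the argument directly you would need a serpentinus analogue of Lemma \ref{l3} (a lower bound on $|B^k|$ in terms of the bounding rectangle of each block, taking the shifted wraparound edges into account) and then combine it, over all blocks of the decomposition from Lemma \ref{l2}, with the constraint $m_{S^k}\geq m-1$, $n_{S^k}\geq n-1$. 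That is essentially the content of the bi-colored result the paper imports; as written, your proposal assumes the conclusion of that analysis rather than supplying it.
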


\begin{theorem}
 In a multi-colored torus serpentinus of size $m\times n$ with
$|\mathcal{C}|\geq 4$, let $S^k$ be constituted  of $N+1$
$k$-colored vertices placed in a whole row $i$, and
 in a row $(i+1)$ mod $m$, column $0$, for $N = n$;
 or placed in a whole column $j$, and in column $(j+1)$ mod $n$, row $0$ for $N = m$; if, for all $k'\in \mathcal{C}-\{k\}$,
$S^{k'}$  is a forest and, for every vertex $x$ in $V^{k'}$ the
vertices in $N(x)\setminus (V^{k'}\cup V^k)$ have different colors,
then $S^k$ is a minimum size monotone dynamo. \label{theo:14new}
\end{theorem}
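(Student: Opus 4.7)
The plan is to follow the two-step template of Theorems \ref{theo:10new} and \ref{theo:12new}. Since the preceding lower-bound theorem gives $|S^k| \geq N+1$ and the prescribed set has exactly $N+1$ vertices, minimality is immediate once $S^k$ is shown to be a monotone dynamo. So the substance reduces to (i) verifying that no vertex ever recolors to a color other than $k$, which yields both monotonicity and, via Lemma \ref{l2}, the absence of $non$-$k$-blocks, and (ii) showing that the $k$-color eventually covers the whole torus.

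For (i) I would transplant the local case-analysis used in the proof of Theorem \ref{theo:10new}. For every $x \in V^{k'}$ with $k' \neq k$, splitting by the number of $k$-, $k'$- and other-colored neighbors and combining the forest hypothesis on $S^{k'}$ with the pairwise-distinct-color hypothesis on $N(x) \setminus (V^{k'} \cup V^k)$, one checks that the only $r$-update the SMP-Protocol can trigger at $x$ is to $k$. Consequently $V^k$ can only grow over time and no $non$-$k$-block is ever produced during the dynamics.

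For (ii) I would trace the propagation of the $k$-color. In the case $N = n$ the argument is essentially the one used for Theorem \ref{theo:12new}: the cordalis link of the torus cordalis persists unchanged in the torus serpentinus, so the extra vertex $v_{(i+1) \bmod m, 0}$ is anchored to $S^k$ by two $k$-neighbors, namely $v_{i, 0}$ above and $v_{i, n-1}$ via the cordalis link. The $k$-color then sweeps across row $(i+1) \bmod m$ by a simple induction on the column index, and once this row is full it plays the role of row $i$ for row $(i+2) \bmod m$, so that all $m$ rows are exhausted in $\Theta(m+n)$ rounds. In the case $N = m$ the dual argument runs column by column, and here the serpentine column connection $v_{m-1, j'} \sim v_{0, (j'-1) \bmod n}$ plays the anchoring role previously played by the cordalis link: the extra vertex is pinned to $S^k$ through the serpentine seam plus its row-adjacency into column $j$, and the sweep proceeds from column $j$ to the next column, and so on around the torus.

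The main obstacle I anticipate is the SMP-Protocol's pair-of-pairs degeneracy: when a vertex sees exactly two $k$-colored and two $k'$-colored neighbors, the rule is silent and propagation could stall. Verifying that the forest-plus-distinct-color hypothesis rules out this configuration on the advancing frontier, both in the bulk of the torus and at the serpentine seam where the wraparound geometry differs from the cordalis case, will be the delicate step; it essentially requires extending the invariant from Theorem \ref{theo:10new} across the extra wraparound. Once that invariant is in place, the rest of the propagation and minimality arguments are routine inductions on the row or column index.
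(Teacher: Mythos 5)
The paper gives no explicit proof of Theorem \ref{theo:14new}: it is introduced with ``we can similarly prove'' and left to inspection, and your two-step template (minimality from the preceding lower bound once $S^k$ is shown to be a monotone dynamo; no recoloring away from $k$ via the forest-plus-distinct-colors hypothesis, transplanted from Theorem \ref{theo:10new}; propagation by a sweep) is exactly the argument the paper intends. For $N=n$ your reasoning checks out: the cordalis row link is retained in the serpentinus, so the extra vertex $v_{(i+1)\bmod m,\,0}$ really does have the two $k$-neighbors $v_{i,0}$ and $v_{i,n-1}$, and the row-by-row sweep survives the modified column wraparound (which only shifts by one which vertex of row $m-1$ sits above a given vertex of row $0$).

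Your anchoring claim in the $N=m$ case, however, does not survive contact with the paper's definition of the serpentine link. That link joins $v_{m-1,j'}$ to $v_{0,(j'-1)\bmod n}$, so the seam leaving the bottom of the $k$-colored column $j$ arrives at the top of column $(j-1)\bmod n$, whereas the statement places the extra vertex at $v_{0,(j+1)\bmod n}$. The wraparound neighbor of that vertex is $v_{m-1,(j+2)\bmod n}$, not a vertex of column $j$; its only $k$-neighbor is $v_{0,j}$. Consequently $S^k$ is not a $k$-block, and the theorem's hypotheses (which constrain only the neighborhoods of $k'$-colored vertices) do not forbid two of the remaining three neighbors $v_{1,(j+1)}$, $v_{0,(j+2)}$, $v_{m-1,(j+2)}$ --- which are pairwise non-adjacent --- from sharing a color $k''$ while the third carries yet another color; in that case the \textbf{SMP-Protocol} recolors the extra vertex to $k''$ at the first step and monotonicity fails. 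The construction (and your argument, verbatim) is repaired by placing the extra vertex at $v_{0,(j-1)\bmod n}$, the exact dual of the cordalis row construction, where it is adjacent to both $v_{0,j}$ and, through the seam, $v_{m-1,j}$. You need either to prove the theorem for that corrected placement or to exhibit the adjacency you assert; as written, the claim that the serpentine seam ``pins'' the vertex at column $(j+1)\bmod n$ is the step that fails.
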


\subsection{On the number of Iterations}
In this section we study the time complexity under the assumptions
that the system is synchronous and takes one unit of time for a
message to arrive and to be processed. Under this restriction we
count the maximum number of rounds needed  for a dynamo to reach
the monochromatic configuration.

Informally, the coloring pattern in case of dynamo in a toroidal
mesh follows an evolution which starts from the corners of the
rectangle  $R_{T-S^k}$ and then it propagates over the diagonals
from the corners to the center. Differently, in case of torus
cordalis and serpentinus, the coloring pattern propagates by rows.

\begin{figure}[h!]
\begin{center}
0 0 0 0 0\\
0 1 2 2 1\\
0 2 3 3 2\\
0 2 3 3 2\\
0 1 2 2 1\\
\caption{A multicolored torus where nodes are represented by the time-steps remaining to assume color $k$.}
%\caption{An evolution on multi-colored toroidal mesh.}
 \label{fig:figtm}
\end{center}
\end{figure}

In order to visualize the recoloring pattern evolution we use to
represent positions in a mesh by a matrix $A$ such that
$a_{i,j}=\sigma$ means that vertex $v_{i,j}$ in the mesh recolors
itself by $k$ after $\sigma$ rounds (see Figure \ref{fig:figtm}
for an example). These observations can be formally stated as
follows.

\begin{theorem} Let the initial coloring configuration of a multi-colored toroidal mesh of size $m\times n$ be as in Theorem
\ref{theo:10new}; the number of rounds needed to reach the final
monochromatic configuration is:
\begin{equation}
2max (\lceil((n-1)/2)\rceil -1,\lceil((m-1)/2)\rceil -1)+1 \\
\end{equation}
\end{theorem}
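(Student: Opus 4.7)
The plan is to track, round by round, the set of $k$-colored vertices, and for each non-$k$ vertex to compute the round $T(i,j)$ at which it first becomes $k$; the answer is then $\max T(i,j)$.

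I would begin by replaying the opening rounds from the proof of Theorem~\ref{theo:10new}. Taking $S^k$ to be column~$0$ together with row~$0$ minus vertex $(0,n-1)$, round~$1$ recolors exactly $(0,n-1)$, $(1,1)$ and $(m-1,1)$. At round~$2$ the two remaining ``corners'' $(1,n-1)$ and $(m-1,n-1)$ join (delayed by one round because they needed $(0,n-1)$ first), together with their immediate interior neighbours on the axes. After round~$2$ the rectangle $R_{T-S^k}$ has all four of its corners $k$-colored, so the subsequent dynamics can be analysed as four diagonal wavefronts propagating inward from these corners.

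Next I would prove by induction on $t$ that the set of $k$-colored cells at round $t$ is exactly the set of non-$k$ vertices reachable by the bootstrap process within $t$ rounds. In each round a vertex $(i,j)$ recolors once two of its four neighbours are already $k$-colored and the remaining two non-$k$ neighbours carry distinct colors (which holds by the hypothesis of Theorem~\ref{theo:10new}), and this is exactly the SMP-Protocol condition. From the recursion one derives a closed-form expression for $T(i,j)$ combining the Manhattan distance of $(i,j)$ from the ``fast'' pair of walls (row~$0$, column~$0$) with the one-round shift associated with the ``slow'' right-hand corners. Finally I would maximise $T$ over all non-$k$ vertices: owing to the one-round delay, the last vertex to recolor lies along the longer axis of the torus, and a straightforward parity case analysis on $m$ and $n$ yields $2\max(\lceil(n-1)/2\rceil - 1,\lceil(m-1)/2\rceil - 1)+1$.

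The principal obstacle is managing the asymmetry introduced by the delayed ignition of the two right-hand corners. A naive four-corner wave analysis would place the slowest cell at the geometric centre of $R_{T-S^k}$, but the one-round shift can move this extremal point, and one must verify carefully that the maximum of $T$ is still controlled by the longer dimension of the torus and is not pushed further inward by the delay. A secondary subtlety is ensuring, throughout the induction, that \emph{some} labelling $a,b,c,d$ of the four neighbours of a newly-recoloured vertex satisfies the SMP-Protocol conjunction $r(a)=r(b) \wedge r(c)\neq r(d)$; this is where the structural hypothesis of Theorem~\ref{theo:10new} on the auxiliary coloring of non-$k$ vertices plays its essential role, ruling out the ``two pairs'' obstruction in which a second color also appears twice in $N(x)$.
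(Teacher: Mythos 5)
Your round-by-round opening is accurate and in fact more faithful to the configuration of Theorem~\ref{theo:10new} than the paper's own proof: the paper simply asserts that all four corners $(1,1),(1,n-1),(m-1,1),(m-1,n-1)$ of $R_{T-S^k}$ recolor at the first step and reads the count off a figure, which amounts to analysing the symmetric seed of Figure~\ref{fig:figtm} (a full row plus a full column, i.e.\ $m+n-1$ vertices) rather than the $(m+n-2)$-vertex dynamo with the missing corner $(0,n-1)$. You correctly observe that with the actual dynamo the first round produces only $(0,n-1),(1,1),(m-1,1)$ and that the two right-hand corners ignite one round late.

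The genuine gap is that you never produce the closed form $T(i,j)$ nor carry out the maximisation, and the outcome you assert for that step is not what the computation gives. For the four-corner wave with no delay one gets $T(i,j)=\min(i,m-i)+\min(j,n-j)-1$, whose maximum $\lfloor m/2\rfloor+\lfloor n/2\rfloor-1$ is a \emph{sum} of the two half-dimensions, not a quantity of the form $2\max(\cdot,\cdot)+1$; the stated bound agrees with it only when $m=n$. Worse, the one-round delay you flag as the ``principal obstacle'' genuinely changes the answer rather than washing out. In the $5\times 5$ mesh with $S^k$ equal to column $0$ plus row $0$ minus $(0,4)$: round $1$ colours $(0,4),(1,1),(4,1)$; round $2$ colours $(1,4),(4,4),(1,2),(2,1),(3,1),(4,2)$; round $3$ colours $(1,3),(4,3),(2,2),(3,2),(2,4),(3,4)$; and $(2,3)$ and $(3,3)$ have no $k$-coloured neighbour at all until round $3$ is complete, so they recolor only at round $4$ --- whereas the formula evaluates to $3$. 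So the ``straightforward parity case analysis'' you defer to is precisely the step that fails: to obtain the stated expression you would have to analyse the symmetric $(m+n-1)$-vertex seed (as the paper implicitly does), and to treat the actual dynamo of Theorem~\ref{theo:10new} you would have to account for the extra round that the delayed corners impose on the cells in the right half of $R_{T-S^k}$.
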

\begin{proof}
In order to count the number of rounds, consider Figure
\ref{fig:figevo}, where at first nodes in positions $(1,1),\;
(1,n-1),\; (m-1,1),\; (m-1,n-1)$ recolor themselves; at second
nodes in
$(2,1),\;(1,2),\;(1,n-2),\;(2,n-1),\;(m-2,1),\;(m-1,2),\;(m-1,n-2),\;(m-2,n-1)$,
and so on. The formula easily follows the inspection.
\end{proof}

\begin{figure}[h!]
\begin{center}
0 0 0 0 0\\
0 1 2 3 4\\
5 6 7 8 7\\
6 7 8 7 6\\
5 4 3 2 1\\
\caption{A multicolored torus where nodes are represented by the time-steps remaining to assume color $k$.}
%\caption{An evolution multi-colored toroidal mesh.}
\label{fig:figevo}
\end{center}
\end{figure}

%\begin{figure}[h!]
 %\centering
 %\subfigure[Toroidal mesh.]
   %{\includegraphics[width=55mm]{./imm/figure/figtm.png} \label{fig:figtm}}
 %\hspace{1mm}
 %\subfigure[Torus cordalis.]
 %{\includegraphics[width=35mm]{./imm/figure/figcord.png} \label{fig:figcord}}
 %\caption{Matrices of recoloring pattern evolution.}
%\label{fig:evo}
%\end{figure}

%
%\begin{center}
%\begin{figure}[h!]
% \centering
% \includegraphics[width=25mm,bb=0 0 161 134]{figtm.png}
% % figD.png: 224x186 pixel, 100dpi, 5.69x4.72 cm, bb=0 0 161 134
% \caption{Matrix of recoloring pattern evolution.}
% \label{fig:figtm}
%\end{figure}
%\end{center}

\begin{theorem}
Let the initial coloring configuration of a multi-colored torus
cordalis of size $m\times n$ be as in Theorem \ref{theo:12new};
and let the initial coloring configuration of a multi-colored
torus serpentinus of size $m\times n$ be as in Theorem
\ref{theo:14new} and $N=n$; the number of rounds needed to reach
the final monochromatic configuration is:
\begin{equation}
(\lfloor((m-1)/2)\rfloor -1)n+\lceil(n/2)\rceil \\
\end{equation}
if $m$ is odd;
\begin{equation}
(\lfloor((m-1)/2)\rfloor -1)n+1 \\
\end{equation}
if $m$ is even.
\end{theorem}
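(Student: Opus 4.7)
The plan is to mimic the proof of the toroidal-mesh timing theorem: give an explicit, inspection-based description of how the color $k$ propagates, and then read off the number of rounds. By rotational symmetry we may assume that the fully $k$-colored row is row $0$ and the extra $k$-colored vertex is $(1,0)$.

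First I would establish two simultaneous source propagations. Row $1$ fills strictly left-to-right, since $(1,j)$ has the two $k$-colored neighbours $(0,j)$ and $(1,j-1)$ while its remaining two neighbours have distinct colours by the initial-configuration hypothesis of Theorems~\ref{theo:12new} and~\ref{theo:14new}; hence $(1,j)$ recolors at step $j$ and row $1$ is finished at step $n-1$. At the same time, the cordalis wrap places $(0,0)$ in the neighbourhood of $(m-1,n-1)$, so $(m-1,n-1)$ already sees two $k$-colored neighbours $(0,n-1)$ and $(0,0)$ at time $0$ and recolors at step $1$; row $m-1$ then fills right-to-left, with $(m-1,n-1-j)$ recoloured at step $j+1$, finishing at $(m-1,0)$ at step $n$. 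The serpentinus case with $N=n$ uses the same horizontal cordalis wrap, so this part of the argument carries over unchanged; the twisted vertical wrap of the serpentinus does not interfere because the propagation is driven by the cordalis edges.

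Next I would iterate row by row. Once row $r-1$ is fully $k$-coloured, the vertex $(r,0)$ has the two $k$-coloured neighbours $(r-1,0)$ and $(r-1,n-1)$ (the latter via the cordalis wrap), so it recolors one step later, and row $r$ then fills left-to-right. The symmetric statement propagates the upward wave one row at a time, starting from column $n-1$. Each additional row swept by a wave thus costs $n$ steps, which is where the coefficient $\lfloor(m-1)/2\rfloor-1$ in the formula originates: it counts the complete rows that one wave sweeps before the two waves meet.

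The main obstacle is the careful analysis of the middle rows, where the two waves meet and cooperate. For $m$ odd there is a single central row whose last unrecoloured vertex lies in the middle column; it receives two $k$-coloured neighbours from the two already coloured adjacent rows only $\lceil n/2\rceil$ steps after the last complete row is done, which gives the final summand. For $m$ even the two waves meet in a one-row gap and a single extra step suffices after the last complete row, giving the $+1$ summand. The cleanest presentation is to record the recoloring-time matrix $A=(a_{i,j})$ as in Figure~\ref{fig:figevo} and verify the closed form by an induction on the distances from the two sources, the claimed number being the maximum entry of $A$. The torus-serpentinus argument with $N=n$ is completely analogous, since only the horizontal cordalis wrap drives the count.
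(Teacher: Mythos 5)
Your overall strategy is the same as the paper's: read off the number of rounds by inspecting the two row-waves (one launched from row $1$ by the extra vertex, one from row $m-1$ via the cordalis wrap), each full row costing $n$ steps, with a parity-dependent finish in the middle. For $m$ odd your count is right in substance, although your description of the meeting has the parities backwards: since $m-1$ is then even, there is \emph{no} single leftover central row; rather the last row of each wave are two \emph{adjacent} rows that fill cooperatively from both ends toward the middle column, and that is precisely why the final phase costs $\lceil n/2\rceil$ rather than $n$.

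The genuine gap is in the even case. With the coefficient $\lfloor(m-1)/2\rfloor-1=m/2-2$ that you (and the statement) use, the two waves sweep only $2(m/2-2)=m-4$ of the $m-1$ rows that must be recolored, so three rows remain, not a ``one-row gap''; a one-row gap would correspond to each wave sweeping $m/2-1=\lfloor(m-1)/2\rfloor$ full rows, i.e.\ to the coefficient $\lfloor(m-1)/2\rfloor$ without the $-1$. Your own propagation rules expose this: for $m=4$ the claimed value is $(\lfloor 3/2\rfloor-1)n+1=1$, yet at time $0$ no vertex of row $2$ has two $k$-colored neighbours, so the process cannot finish in one round; for $m=6$, $n=5$ tracking the recoloring-time matrix as you propose gives $10=\lfloor(m-1)/2\rfloor\cdot n$, not $6$. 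So the inspection argument, carried out honestly, does not produce the displayed even-$m$ formula, and the step ``a single extra step suffices after the last complete row'' is exactly where it breaks. (The same slip is latent in the identity $\bigl(m-\lfloor(m-1)/2\rfloor-1\bigr)-\lfloor(m-1)/2\rfloor+1=m-2\lfloor(m-1)/2\rfloor$, which equals $1$ for $m$ odd and $2$ for $m$ even --- the opposite of what the one-row-gap picture requires.) To repair the proof you must either re-derive the even-$m$ expression from the wave count (it comes out as $\lfloor(m-1)/2\rfloor\cdot n$ up to an additive $1$) or flag the discrepancy with the stated formula; as written, the proposal asserts the formula rather than proving it.
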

\begin{proof}
In order to count the number of rounds, consider Figure
\ref{fig:figevo}: at first nodes in positions $(1,2)$ and
$(m-1,n-1)$ recolor themselves; at second nodes in $(1,3)$ and
$(m-1,n-2)$, and so on until vertices in rows one and $m-1$ have
recolored themselves. Then, vertices of rows $2$ and $m-2$ update
their color, and so on until vertices in positions
$(\lfloor((m-1)/2)\rfloor,1)$ and
$(m-\lfloor((m-1)/2)\rfloor-1,1)$ recolor themselves. Next step
differs in case that $m$ is even or odd. If $m$ is even, then the
two vertices are adjacent since
$(m-\lfloor((m-1)/2)\rfloor-1)-\lfloor((m-1)/2)\rfloor+1=1$. Hence
in one step more the computation ends. Otherwise, if $m$ is odd,
then $(m-\lfloor((m-1)/2)\rfloor-1)-\lfloor((m-1)/2)\rfloor+1=2$:
this means that into $\lceil(n/2)\rceil$ steps the computation
terminates. The formula easily follows the considerations.

The proof follows the steps of the previous one. Indeed even if
the neighbors of the vertices into considerations are different,
the pattern remains exaclty the same.
\end{proof}
%
%\begin{center}
%\begin{figure}[h!]
% \centering
% \includegraphics[width=25mm,bb=0 0 161 134]{figcord.png}
% % figD.png: 224x186 pixel, 100dpi, 5.69x4.72 cm, bb=0 0 161 134
% \caption{Matrix of recoloring pattern evolution.}
% \label{fig:figcord}
%\end{figure}
%\end{center}

%\begin{theorem}
%Let the initial coloring configuration of a multi-colored torus
%serpentinus of size $m\times n$ be as in Theorem \ref{theo:14new};
%if $N=n$, the number of rounds needed to reach the final
%monochromatic configuration is:
%\begin{equation}
%(\lfloor((m-1)/2-1)\rfloor -1)n+\lceil(n/2)\rceil \\
%\end{equation}
%if $m$ is odd;
%\begin{equation}
%(\lfloor((m-1)/2)\rfloor -1)n+1 \\
%\end{equation}
%if $m$ is even.
%%\begin{equation}
%%(\lfloor((n-1)/2-1)\rfloor -1)m+\lceil(m/2)\rceil \\
%%\end{equation}
%%if $N=m$.
%\end{theorem}
%\begin{proof}
%The proof follows the steps of the previous one. Indeed even if
%the neighbors of the vertices into considerations are different,
%the pattern remains exaclty the same.
%\end{proof}

%The one step needed to a vertex in $T-S$  to assume the color of
%the neighboring vertices adjacents to $S$ multiplied for the
%number of rectangles $max(\lceil \frac{n-1}{2})\rceil,\lceil
%\frac{m-1}{2})\rceil))$ in which the pattern is performed
%recursively and a costant $\sigma$ representing the simultaneous
%coloring of entities reducing or increasing the number of
%iterations.

\section{Conclusions}
%In this work we addressed the dynamic monopolies by means of local majority rule protocol by starting from a ground breaking research work reported in \cite{Lodi98}.
%In this paper we extend the initial problem by allowing nodes to assume colors by a set of colors with cardinality greater than 2 and consequently adapting the simple majority rule applied. In particular we focused on the initial distributions of colors allowing entities to converge within a finite time on the same color by means of a majority based rule protocol. We also determine the time complexity to terminate the computations in the three different topology of the system.
%
%The study approaches tori because they are either a good abstraction for several scenarios and a simple framework to work with.

In this paper we introduced a multicolored version of {\em dynamos}, it is an extension of the original {\em target set selection} problem (TSS) which consists in finding out in a graph the number of nodes needed to activate the entire network in a world of black and white (faulty, non-faulty) nodes.
In this paper, our distributed protocol allows vertices to assume values from a finite set with cardinality higher than 2.
The rule defined in our model can be stated as follows: a node recolors itself by directly
assuming the color of the adjacent vertices either if two neighbors have the same color and the
remaining ones have different colors in between or all the neighbors have the same color.
We derive lower and upper bounds on the size of dynamos for toroidal mesh, torus cordalis and torus serpentinus. The provided bounds are tights. For each one of the addressed topology a dynamo of the minimum size is shown, and the time complexity in terms of number of rounds needed to reach a monochromatic configuration is provided. 

However, at the end of this work, there are some case studies and several pressing and interesting research questions that still remain open. For instance, considering the anologies of this problem with the social processes such as the {\em information diffusion} or {\em viral marketing} processes, scale-free networks could be studied under the \textbf{SMP-Protocol} in order to have a comparitive analysis with respect to other algorithmic models of social influence on social networks, e.g. the bounded confidence model \cite{amblard01}.
In addition, considering the increasing attention to the dynamic aspects of the gossip and of the ``rumors'' spreading, such a protocol should be investigated in contexts where graphs are subject to intermittent availability of both links and nodes \cite{CFQS2010a}. 

%
%After this preliminary study several questions still remain open.
%For instance, what would happen by adopting another updating
%function that allows the entities to shift their value of a
%fraction toward the color of the majority, instead to directly
%assumes the color? Another interesting follow-up of this work will
%be the application of different entities' updating pattern in
%dynamic networks by addressing the fault tolerant consensus.

%\section{Acknowledgements}
%A particular thanks to Nicola Santoro, Paola Flocchini and Linda Pagli for the precious suggestions provided during the development of this work.

\bibliographystyle{plain}
\bibliography{biblio}
%\bibliographystyle{plain}
%\bibliography{dtn}

\end{document}